	\newcommand{\sleq}{\sqsubseteq}
	\newcommand{\sSup}{\bigsqcup}
	\newcommand{\N}{\mathbb{N}}
	\newcommand{\tracei}[2]{\mathrm{Tr}^{{#1}}(#2)}
	\newtheorem{mythm}{Theorem}
	\newtheorem{myprop}[mythm]{Proposition}
	\newtheorem{mylem}[mythm]{Lemma}
	\theoremstyle{definition}
	\newtheorem{mydef}{Definition}
	\newcommand{\g}[1]{{\mathsf{#1}}}
	\newcommand{\iter}{\mathrm{iter}}
        \newcommand{\fork}{\mathord{\curlywedge}}  % looks similar to the diagonal \delta, which generalises it
		\newcommand{\join}{\curlyvee}
\begin{document}

%\conferenceinfo{CONF 'yy}{Month d--d, 20yy, City, ST, Country}
%\copyrightyear{20yy}
%\copyrightdata{978-1-nnnn-nnnn-n/yy/mm}
%\copyrightdoi{nnnnnnn.nnnnnnn}

% Uncomment the publication rights you want to use.
%\publicationrights{transferred}
%\publicationrights{licensed}     % this is the default
%\publicationrights{author-pays}

\title{Diagrammatic Semantics for Digital Circuits}
\author[1]{Dan R. Ghica}
\author[1]{Achim Jung}
\author[2]{Aliaume Lopez}
\affil[1]{University of Birmingham}
\affil[2]{ENS Cachan, Universit\'e Paris-Saclay}
\maketitle

\begin{abstract}

We introduce a general diagrammatic theory of digital circuits, based on connections between monoidal categories and graph rewriting. The main achievement of the paper is conceptual, filling a foundational gap in reasoning syntactically and symbolically about a large class of digital circuits (discrete values, discrete delays, feedback). This complements the dominant approach to circuit modelling, which relies on simulation. The main advantage of our symbolic approach is the enabling of automated reasoning about \textit{parametrised circuits}, with a potentially interesting new application to \textit{partial evaluation} of digital circuits. Relative to the recent interest and activity in categorical and diagrammatic methods, our work makes several new contributions. The most important is establishing that categories of digital circuits are Cartesian and admit, in the presence of feedback expressive iteration axioms. The second is producing a general yet simple graph-rewrite framework for reasoning about such categories in which the rewrite rules are computationally efficient, opening the way for practical applications. 

\end{abstract}

\section{Introduction}

Of the many differences between the worlds of software and hardware design, a particularly intriguing one is their prevailing modelling methodologies. The workhorse of software reasoning ---~\textit{operational semantics}~\cite{DBLP:journals/jlp/Plotkin04a}~--- is syntactic and reduction-based. It is essentially an abstract, entirely machine-independent presentation of a programming language which is not required to be faithful to the execution model other than insofar as the final result is concerned. On the other hand, reasoning about hardware relies on having an accurate execution model, akin to what we would call an \textit{abstract machine} in programming languages, usually some kind of automaton~\cite{DBLP:journals/tcad/KurshanM91}. To reason about a circuit, it is translated so that its execution is simulated by the automaton. The abstract machine approach is of course established and useful in programming language theory as well~\cite{landin1965abstract}, especially in compiler design. But the operational semantics has several advantages over the abstract machine approach, of which perhaps the most important is the ability to evaluate programs which are specified only in part. This is useful because many front-end compiler optimisations are, in one way or another, partial evaluations~\cite{consel1993tutorial}.

Broadly speaking, the main contribution of our paper is to provide an operational semantics for digital circuits, based on diagram rewriting. Our methodology is influenced by the interplay between graph rewriting and monoidal categories, which led in the last decade to diagrammatic models for quantum computing~\cite{DBLP:conf/lics/AbramskyC04}, signal flow~\cite{DBLP:conf/popl/BonchiSZ15} and asynchronous circuits~\cite{DBLP:conf/birthday/Ghica13}. Algebraic specifications in the style of monoidal categories have been pioneered by Sheeran in the 1980s~\cite{DBLP:conf/lfp/Sheeran84} and a certain amount of algebraic reasoning about circuits using such specifications has been attempted~\cite{Luk1993}. However, a full and systematic categorical presentation of digital circuits has only been given recently by the first two authors of the present paper~\cite{GhicaJ16}. Starting only from an equational specification of digital components (``gates'') it shows that the free traced monoidal category, subject to certain quotients, is Cartesian. Such categories, known as \textit{dataflow categories}~\cite[Sec.~6.4]{selinger2010survey} (or \textit{traced cartesian categories}~\cite{hasegawa2012models}), have very useful equations for iterative unfolding of the trace~\cite{cuazuanescu1990towards,hasegawa2012models,simpson2000complete}, offering a convenient way to model feedback.

The main theoretical contribution of this paper is providing a rewriting semantics for dataflow categories with a discrete delay operator. It is well known that an algebraic semantics does not automatically translate into an operational semantics, because distributive laws (in particular the functoriality of the tensor product) are directionless. This is where the diagrammatic approach can help when used as a graphical syntax, by avoiding the need for such problematic laws~\cite{DBLP:conf/lics/BonchiGKSZ16} and leading to computationally efficient rewriting. The iteration axioms also raise difficulties, this time of identifying diagrammatic redexes. This problem is compounded by the fact that choosing the wrong iterators to unfold can lead to unproductive rewrites. Finally, the presence of delays raises yet a different set of technical challenges because they cannot be rewritten out of a circuit but only moved around using a \textit{retiming} axiom~\cite{leiserson1991retiming}. We solve these problems by writing circuit diagrams in a particular canonical form, which we call \textit{global trace delay}, for which we can provide effective and efficient unfolding, with certain guarantees of productivity. 

The main motivation of this work is to open the door to new optimisation techniques for digital circuits, similar to partial evaluation. We will test our theory against a particularly challenging class of circuits, so called \textit{circuits with combinational feedback}~\cite{DBLP:conf/iccad/Malik93}. These are circuits which, despite the presence of feedback loops, behave just like combinational circuits, i.e. they exhibit none of the effects associated with genuine feedback, such as state or oscillation. As is the case with operational semantics, we will see how handling such circuits is mathematically elementary and fully automated. This is indeed remarkable, because the conventional automata-based reasoning method does not accept combinational feedback. Denotational semantics can model such circuits~\cite{DBLP:journals/fmsd/MendlerSB12} but using rather complex mathematical machinery. Moreover, we will show how circuits with combinational feedback which are parametrised by unspecified ``black box'' components can be just as easily handled by our approach. As far as we know, there is no existing method for modelling such circuits (called ``abstract circuits'') in the design literature.

\section{Categorical semantics}\label{sec:catsem}

The material in most of this section is presented more extensively in~\cite{GhicaJ16}.
%, except Thm.~\ref{thm:combchar} which is a new result AND FALSE.

\subsection{Combinational circuits}

We introduce a categorical language of circuits. Let \emph{object variables}, labelling (collections of) wires, be natural numbers and let \emph{morphism variables} be labels for boxes (e.g., gates and circuits). This is a category of PROducts and Permutations (PROP)~\cite{lack2004composing}.

\begin{mydef}\label{def:circ}\em
	Let $\mathbf{Circ}$ be a categorical signature with objects the natural numbers~$\mathbb{N}$ and a finite set of morphisms which may be grouped into the following three classes:
	\begin{compactitem}%[noitemsep,topsep=0pt,parsep=0pt,partopsep=0pt]
		\item \emph{levels} (or \textit{values}) $v:0\rightarrow 1$ forming a lattice $(\mathbf V,\sleq)$;
		\item \emph{gates} $k:m\rightarrow 1$; and
		\item the special morphisms \emph{join} $\join:2\rightarrow 1$, \emph{fork} $\fork:1 \rightarrow 2$, and \emph{stub} $\g w:1\rightarrow 0$. 
	\end{compactitem}
\end{mydef}
All circuit signatures include combinators for joining two outputs (\emph{join}) and duplicating an input (\emph{fork}), as well as the ability to discard an output (\emph{stub}). What varies from signature to signature is the number of signal levels and the set of gates. Since levels form a lattice, they must include a smallest element ($\bot$), corresponding to a disconnected input, and a top element ($\top$) corresponding to an illegal output (``short circuit''). In the simplest and most common instance, the set of level has two other elements, \emph{high} and \emph{low}, but it can go beyond that. For example, in the case of metal-oxide-semiconductor field-effect transistors (MOSFET) it makes sense, in certain designs, to model the diode properties of the transistor by taking into account four levels (strong and weak high and low voltage, cf. the relevant IEEE standard for logical simulations~\cite{IEEE1164}).

\emph{Circuits}, in the sense of this paper, are the morphisms of a free categorical construction over their signature. Beginning with  \emph{combinational circuits}, the free construction is as follows:
\begin{mydef}\label{def:ccirc}
	Let $\mathbf{CCirc}$ be the free symmetric monoidal category over $\mathbf{Circ}$ and monoidal signature $(\mathbb N, {+}, 0)$, and equations:
	\\
	\textbf{Fork:} $\fork\circ v =v\otimes v.$
	\\
	\textbf{Join}: $\join\circ (v\otimes v') = v\sqcup v'.$
	\\
	\textbf{Stub}: $\g w \circ v= \mathit{id}_0,$ 
	\\
	\textbf{Gate}: $k\circ \bigotimes_{i=1,m}v_i=v$, such that whenever $v_i\sqsupseteq v_i'$ then $k\circ \bigotimes_{i=1,m}v_i\sqsupseteq k\circ \bigotimes_{i=1,m}v_i'$.	
\end{mydef}
We will call morphisms in this category \textit{combinational circuits.}

The first three model the fact that a fork duplicates a value, a join coalesces two values, and a stub discards anything it receives. The gate equations must cover all possible inputs to a gate~$k$ and their particular format entails that the output from a gate is always one of the original levels in~$\mathbf V$. Since $\mathbf V$ is a lattice, the monotonicity requirement is also expressible equationally.

It is known that, in a formal sense, the equality of morphisms in a free SMC corresponds to graph isomorphisms in the diagrammatic language~\cite{joyal1991geometry}, where diagrams are created by the operations of sequential composition ($\circ$), parallel composition ($\otimes$) and symmetry ($\g x_{m,n}$, the swapping of two buses with $m$ and~$n$ wires, respectively), governed by coherence equations. We will usually write composition in diagrammatical order $f\cdot g= g\circ f $. We write the identity (bus of width~$m$) $\mathrm{id}_m:m\rightarrow m$ as simply $m$. For simplicity we also write $\bigotimes_{i=1,m}f=f^m$, $\bigotimes_{i=1,m}f_i=\mathbf{f}$ and $\bigotimes_{i=1,m}v_i=\mathbf{v}$. For lack of space we will not enumerate the coherence equations here, since they are standard.
%
%Without enumerating them, the equations of SMCs reflect either the fact that the same diagram can be described in two ways or the fact that two diagrams are graph isomorphic. 
%An example of the first case is the \emph{functoriality} of $\otimes$, which gives two equal ways of describing the diagram below. 
%\begin{center}
%	\includegraphics[scale=0.75]{}
%	\\
%	$(f\cdot f')\otimes(g\cdot g')= (f\otimes g)\cdot (f'\otimes g')$.
%\end{center}
%An example of the second kind is the fact that bus-swapping is \emph{involutive}:
%\begin{center}
%	\includegraphics[scale=0.75]{}
%	\\
%	$\g x_{m,m}\cdot \g x_{m,m}=2m$
%\end{center}
%
%The \emph{extensional completeness} principle states that in a circuit the behaviour of a gate must be determined by its input values only. This means that we will deliberately ignore global interactions between components such as electromagnetic interference or quantum tunnelling. \emph{Monotonicity} says that gates must behave monotonically with respect to the ordering of the levels.
%\footnote{Regarding the earlier example of the four values $\{\bot,\mathrm{low},\mathrm{high},\top\}$, note that we would not order \emph{low} below \emph{high}; both are valid levels on an equal footing.} 
%
%We shall see later the importance of these equations. Finally, the last equation represents our commitment to an input-output based notion of equivalence: no matter what value we plug into an unobserved output (\emph{stub}) we get equivalent circuits, so that all circuits with no inputs and no outputs end up being equal. 

The \textit{Gate} axioms state that the behaviour of basic components is fully defined by their inputs, i.e. they are \textit{extensionally complete}.
By simple inductive arguments on the structure of morphisms we can establish that all circuits are in fact extensionally complete, i.e. for any circuit (not just gates) $f:m\rightarrow n$, for any values $v_i, 1\leq i\leq m$, there exist unique values $v'_j, 1\leq j\leq n$ such that $f\circ \bigotimes_{i=1,m}v_i=\bigotimes_{i=1,n}v_i'$. Intuitively this means that we only model \textit{local interactions}, abstracting away from global effects such as electromagnetic interference or quantum tunelling etc. 

We can further say that two circuits with the same input-output behaviour are \emph{extensionally equivalent}, and a simple inductive argument shows that this is a \emph{congruence}, i.e. it is an equivalence preserved by sequential and parallel composition. Therefore it makes sense to \emph{quotient} our category $\mathbf{CCirc}$ and create a new category $\mathbf{ECCirc}$ in which equivalent circuits are made equal. 

$\mathbf{ECCirc}$ has interesting additional categorical properties which aid reasoning. Two are of particular importance. The first one is that $\mathbf{ECCirc}$ is \emph{Cartesian}. The \emph{diagonals} are defined by $
\Delta_0 = 0$ and 
$\Delta_{n+1} =(\Delta_n \otimes \fork)\cdot(n\otimes \g x_{(1,n)}\otimes 1)
$, \textit{forks} of width~$n$. The diagonal has two important \emph{coherences} represented by the following diagram equalities valid for any diagram $f:n\rightarrow m$.
\begin{center}
	\includegraphics[scale=1]{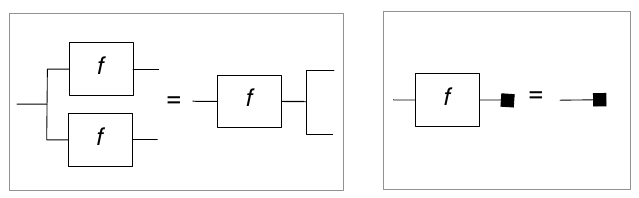}
	\\
	$\langle f,f\rangle=\Delta_n\cdot(f\otimes f)=f\cdot\Delta_m\qquad f\cdot \g w^m=\g w^m$. 
\end{center}
Another useful property is that $(\fork, \join, \g w, \bot)$ forms what is known as a \emph{bialgebra}, i.e. an algebraic structure in which $(\join, \bot)$ is a commutative monoid, $(\fork, \g w)$ is a co-commutative co-monoid, such that
$\join \cdot \fork = \fork^2\cdot(1\otimes \g x_{1,1}\otimes 1)\cdot \join^2$:
%DELETE?
\begin{center}
	\includegraphics[scale=1]{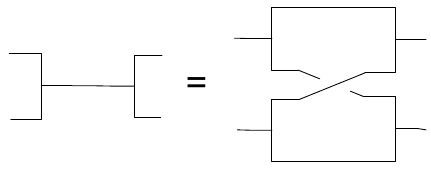}\\
\end{center}
Fork is a section and join a retraction,
$\fork\cdot \join=1$:
\begin{center}
	\includegraphics[]{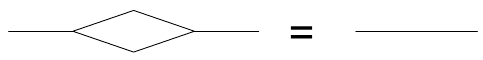}\\
\end{center} 
They are not generally isomorphisms, except for a special context:
\begin{myprop}\label{prop:pseudo}
	For any $f:m\rightarrow n+1$, 		$\Delta_m\cdot f^2\cdot(n\otimes (\join\cdot \fork)\otimes n)
	=\Delta_m\cdot f^2\cdot(n\otimes 1)^2$: 
	\begin{center}
		\includegraphics[]{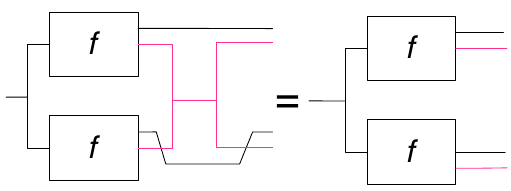}
	\end{center}
\end{myprop}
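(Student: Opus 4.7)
My plan is to combine the Cartesian coherence recorded earlier in the paper with the section--retraction identity $\fork\cdot\join=1$. First, apply the coherence $\Delta_m\cdot(f\otimes f)=f\cdot\Delta_{n+1}$ to both sides: the right-hand side becomes $f\cdot\Delta_{n+1}$, while the left-hand side becomes $f\cdot\Delta_{n+1}\cdot(n\otimes(\join\cdot\fork)\otimes n)$. Since pre-composition with $f$ preserves equalities, the proposition reduces to the $f$-free identity
\[
\Delta_{n+1}\cdot(n\otimes(\join\cdot\fork)\otimes n) \;=\; \Delta_{n+1}.
\]

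To prove this, I would unfold $\Delta_{n+1}$ using its recursive definition
\(
\Delta_{n+1} = (\Delta_n\otimes\fork)\cdot(n\otimes\g x_{(1,n)}\otimes 1)
\)
and argue, as the accompanying figure makes visually clear, that after the shuffle $\g x_{(1,n)}$ the two wires on which the middle $\join\cdot\fork$ acts are exactly the two outputs of the $\fork$ that was applied to the $(n+1)$-st input. Using naturality of the symmetry $\g x_{(1,n)}$ to commute the middle $\join\cdot\fork$ past the shuffle isolates a local fragment of the form $\fork\cdot\join\cdot\fork$. By the axiom $\fork\cdot\join=1$ stated immediately before the proposition, this fragment collapses to $\fork$, which re-assembles into $\Delta_{n+1}$ unchanged. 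A clean way to present this formally is induction on $n$: the base case $n=0$ is exactly $\fork\cdot\join\cdot\fork=\fork$, immediate from $\fork\cdot\join=1$; the inductive step uses the symmetry-naturality rewrite just described on the outer $\fork$, and the inductive hypothesis on the $\Delta_n$ portion.

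The main obstacle is the combinatorial bookkeeping surrounding $\g x_{(1,n)}$: one has to verify that commuting $(n\otimes(\join\cdot\fork)\otimes n)$ past $(n\otimes\g x_{(1,n)}\otimes 1)$ really aligns the $\join\cdot\fork$ with the two outputs of the outermost $\fork$ in the recursive unfolding. This is immediate from the string diagram, where it is a local ``slide a crossing through a local rewrite'' move, but written in pure algebra it requires a short chain of SMC coherence identities (functoriality of $\otimes$ and naturality of symmetry). Once this alignment is in place, the section--retraction axiom does the rest of the work, and the proposition follows.
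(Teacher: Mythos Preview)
The paper does not supply a proof for this proposition; it is stated together with the diagram and the surrounding material is attributed to the authors' earlier work. So there is nothing to compare against directly, and I evaluate your argument on its own terms.

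Your overall strategy---use the Cartesian coherence $\Delta_m\cdot(f\otimes f)=f\cdot\Delta_{n+1}$ to eliminate~$f$, then invoke $\fork\cdot\join=1$---is the right idea. The problem is the reduced equation you land on,
\[
\Delta_{n+1}\cdot\bigl(n\otimes(\join\cdot\fork)\otimes n\bigr)\;=\;\Delta_{n+1},
\]
which is \emph{false}. Take $n=1$ and feed in a pair $(a,b)$: the right-hand side outputs $(a,b,a,b)$, while on the left the $\join\cdot\fork$ acts on wires~$2$ and~$3$, producing $(a,\;a\sqcup b,\;a\sqcup b,\;b)$. These coincide only when $a=b$. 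Your claim that ``the two wires on which the middle $\join\cdot\fork$ acts are exactly the two outputs of the $\fork$ that was applied to the $(n{+}1)$-st input'' is precisely where the argument breaks: in the diagonal $\Delta_{n+1}(y_1,\ldots,y_{n+1})=(y_1,\ldots,y_{n+1},y_1,\ldots,y_{n+1})$ the wires in positions $n{+}1$ and $n{+}2$ carry $y_{n+1}$ and $y_1$, one copy of the last coordinate and one copy of the \emph{first}, not two copies of the same coordinate. No amount of SMC bookkeeping with $\g x_{(1,n)}$ will make those two wires come from a common fork.

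Because your reduction via the Cartesian coherence is valid (and instantiating $f=\mathit{id}_{n+1}$ recovers the reduced equation), this actually shows that the displayed formula cannot be literally correct either; the diagram, not the inline algebra, is the authoritative statement. In the figure the two wires entering $\join$ are the \emph{corresponding} distinguished outputs of the two copies of~$f$, and a symmetry bringing them adjacent has been suppressed in the formula. Under that intended reading your method goes through: after the coherence the two inputs to $\join$ are the two branches of a single $\fork$, and $\fork\cdot\join\cdot\fork=\fork$ finishes the proof. So keep the strategy, but flag the mismatch between the written formula and the picture rather than trying to verify the formula as printed.
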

Another useful connector is the \textit{join} of width $n$, defined as $\nabla_0=0$ and $\nabla_{n+1}=(n\otimes\g x_{1,n}\otimes 1)\cdot(\nabla_n\otimes \join)$.

\subsection{Circuits with discrete delays}

\begin{mydef}
	\label{def:ccircd}
	Let $\mathbf{CCirc}_\delta$ be the category obtained by freely extending $\mathbf{ECCirc}$ with a new morphism $\delta:1\rightarrow 1$ subject to the following equations:
	\begin{compactitem}
		\item[\textbf{Timelessness}:] For any gate or structural morphism $k:m\rightarrow n$, $\delta^m\cdot k=k\cdot \delta^n$.
		\item[\textbf{Streaming}:] For any gate $k:m\rightarrow 1$ and levels $\mathbf v$, $( \delta^m\otimes\mathbf{v})\cdot\nabla_m\cdot k=(( \delta^m\cdot k)\otimes(\mathbf{v}\cdot k))\cdot \g \nabla_1$. 
		\item[\textbf{Disconnect}:] $\bot\cdot\delta=\bot.$
		\item[\textbf{Unobservable delay}:] $\delta\cdot\g w=\g w$.
	\end{compactitem}
\end{mydef}
\emph{Timelessness} means that compared to $\delta$, all other basic gates and structural morphisms compute instantaneously.  An immediate consequence is that delays can be propagated through combinational circuits, akin to \textit{retiming}~\cite{leiserson1991retiming}. \emph{Disconnect} means that the initial conditions of circuits is $\bot$, so that a wire that also promises to dangle later might as well be considered dangling already. The last rule expresses the same for dangling output wires. Diagrammatically, this is:
\begin{center}
	\includegraphics[scale=1]{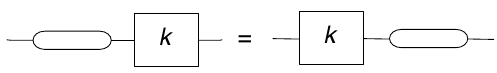}
\end{center}

%Note that any circuit with negative delays, which are not realistic, can be retimed into a realistic circuit by delaying the output:
%\begin{myprop}\label{prop:nodelay}
%	For any circuit $f:m\rightarrow n$, there exists $t\in\mathbb Z$ such that $f\cdot \delta_t^n$ has no negative delays. 
%\end{myprop}
%The proof is immediate by induction on the structure of $f$ and using retiming. We will call the sub-category of circuits without negative delays $\mathbf{CCirc}_+$

The \emph{Streaming} axiom is more interesting, and it was one of the essential new axioms proposed in~\cite{GhicaJ16}. It is key to capturing the intuition of $\delta$ as a \emph{delay} operator. Mathematically, first observe that there are infinitely many morphisms of type $0\rightarrow1$ in $\mathbf{CCirc}_\delta$, not just the finitely many values. This is because expressions such as $v\cdot\delta$ do not reduce to a value. However, it can be shown that any expression built from values, $\delta$, and the structural morphisms can be transformed into a \emph{canonical form} which may be viewed as a \textit{sequence of values presented over time}, something that is called a \emph{waveform} in hardware design lingo. We write a waveform consisting of $n+1$ values as a list $s_n=v_n::v_{n-1}::\cdots::v_0$ where $v_n$ is the value that is currently visible, $v_{n-1}$ becomes visible in the next step, and so on. Formally, $s_0=v_0$ and $s_{n+1}=(s_n\cdot\delta\otimes v_{n+1})\cdot\join$. For example, the expression $v\cdot\delta$ corresponds to the waveform $\bot::v$; a value~$v$ is equal to (any of) the waveforms $v::\bot::\cdots::\bot$ which means that it is only available \emph{now} but no longer in the next time-step. As before, we write $\bigotimes_{i=i,m}s=s^m$ and $\bigotimes_{i=i,m}s_i=\mathbf{s}$. In the case of $m=2$, the equation is represented by the diagram:
\begin{center}
	\includegraphics[scale=1]{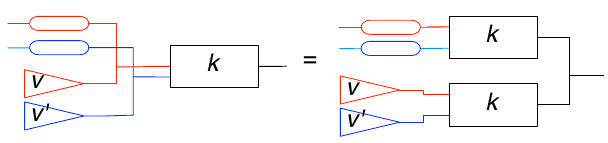}
\end{center}

The \emph{Streaming} axiom now tells us how a gate processes a waveform: we create two separate instances of the gate, one to process the immediate inputs and another to process the subsequent inputs. Applying it repeatedly to a given circuit allows us to determine the waveform that is produced at the output wires. We obtain:

\begin{mythm}[Extensionality]\label{thm:ext}
	Given any morphism $f$ in $\mathbf{CCirc}_\delta$, for any input waveform $\mathbf{s}$ there exists a unique output waveform $\mathbf{s'}$ such that $\mathbf s\cdot f=\mathbf {s'}$.
\end{mythm}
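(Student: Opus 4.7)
The plan is to proceed by structural induction on $f$, exploiting the fact that $\mathbf{CCirc}_\delta$ is freely generated (modulo its equations) by the combinational generators, the delay $\delta$, and the structural morphisms, combined via $\otimes$ and $\cdot$. The compound cases are immediate: for $f = g \otimes h$, split $\mathbf{s}$ along the tensor and apply the induction hypothesis componentwise; for $f = g \cdot h$, apply the hypothesis to $\mathbf{s} \cdot g$ and then again to the resulting tensor of waveforms. Existence and uniqueness therefore reduce to the atomic generators.

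For identities and symmetries, the output waveform is just a permutation of the input components. For the structural morphisms $\fork$, $\join$, $\g w$, \textbf{Timelessness} lets us push the delays appearing in $\mathbf{s}$ past the generator, after which the value-level \textbf{Fork}, \textbf{Join}, and \textbf{Stub} equations of Definition~\ref{def:ccirc} uniquely determine the output time-step by time-step. For the generator $\delta$ itself, composing with a canonical waveform $s_n$ and using \textbf{Disconnect} yields the waveform $\bot :: s_n$.

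The substantive case is a gate $k : m \to 1$, where \textbf{Streaming} does the work. Given a tensor $\mathbf{s} = s^{(1)} \otimes \cdots \otimes s^{(m)}$, first pad the components to a common length~$n$, viewing a value $v$ as the waveform $v :: \bot :: \cdots :: \bot$ (justified by \textbf{Disconnect} together with \textbf{Unobservable delay}). Then induct on $n$. The base case $n = 0$, where all $s^{(i)}$ are values, is precisely the \textbf{Gate} axiom. For the inductive step, each $s^{(i)}$ decomposes as $(s^{(i)-} \cdot \delta \otimes v^{(i)}) \cdot \join$; after rewriting the tensor $\mathbf{s}$ into the shape $(\delta^m \otimes \mathbf{v}) \cdot \nabla_m$ using the bialgebra laws and SMC coherence, \textbf{Streaming} converts $\mathbf{s} \cdot k$ into $((\mathbf{s}^- \cdot k) \cdot \delta \otimes (\mathbf{v} \cdot k)) \cdot \join$. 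The induction hypothesis turns $\mathbf{s}^- \cdot k$ into a waveform of length $n-1$, and the \textbf{Gate} axiom turns $\mathbf{v} \cdot k$ into a value, yielding a canonical waveform of length~$n$ for $\mathbf{s} \cdot k$.

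The hardest step will be the syntactic preprocessing in the gate case: rewriting an arbitrary tensor of canonical waveforms into the precise pattern $(\delta^m \otimes \mathbf{v}) \cdot \nabla_m$ required by \textbf{Streaming}, since the canonical form of $\mathbf{s}$ places a separate $\join$ on each wire, and merging these into a single $\nabla_m$ requires repeated use of the bialgebra equations and the SMC coherence laws. Uniqueness of $\mathbf{s'}$ then follows because the procedure above produces $\mathbf{s'}$ already in the canonical waveform form $v'_n :: \cdots :: v'_0$, and each $v'_j$ is uniquely determined at its time-step by the extensional completeness already proved for $\mathbf{ECCirc}$.
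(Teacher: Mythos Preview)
The paper does not actually present a proof here; it defers to \cite{GhicaJ16} and records only that Proposition~\ref{prop:pseudo} (the fact that $\join\cdot\fork$ acts as the identity when both branches are fed by the same diagonal) ``plays a key role.'' Your argument, by contrast, is a direct structural induction on a generating term for~$f$, reducing everything to the single-gate case where \textbf{Streaming} and an inner induction on waveform length finish the job. You never invoke Proposition~\ref{prop:pseudo}, and indeed you do not need to: that proposition is what lets one apply a streaming-style step to an \emph{arbitrary} composite circuit at once (since duplicating the circuit creates a $\join\cdot\fork$ pattern on internal wires), whereas your generator-by-generator induction sidesteps that issue entirely. So this is a genuinely different route, more elementary in that it avoids the ``global'' duplication lemma, at the cost of having more base cases to check.

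One point is under-argued. For the structural morphisms you write that ``\textbf{Timelessness} lets us push the delays appearing in $\mathbf{s}$ past the generator''; but a waveform already contains $\join$s, so $s\cdot\fork$ immediately produces a $\join\cdot\fork$ pattern which Timelessness alone cannot dissolve. What you need there is the bialgebra law $\join\cdot\fork=(\fork\otimes\fork)\cdot(1\otimes\g x_{1,1}\otimes 1)\cdot(\join\otimes\join)$, after which Timelessness and an inner induction on the waveform length give $s\cdot\fork=s\otimes s$; the $\join$ case similarly needs the commutative-monoid laws for $(\join,\bot)$. You do invoke the bialgebra laws later in the gate case, so this is a presentational gap rather than a mathematical one. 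Your uniqueness paragraph is really arguing uniqueness of the \emph{canonical value sequence} $(v'_n,\ldots,v'_0)$, which is the nontrivial content; uniqueness of $\mathbf{s'}$ as a morphism is immediate once existence is shown.
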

The proof is given elsewhere~\cite{GhicaJ16}, noting that Prop.~\ref{prop:pseudo} plays a key role. 
%The proof is by induction on the structure of $f$ and uses routine calculations and the following lemma.
%\begin{mylem}
%	For any waveform $\mathbf s$ of size $n$ and for any $m\geq n$ there exists a waveform $\mathbf{s'}$ of size $m$ such that $\mathbf{s}=\mathbf{s'}.$
%\end{mylem}
%The larger waveform $\mathbf{s'}$ is constructed by adding delayed $\bot$ values wherever required. $s_{n+1}=(s_n\otimes(\bot\cdot\delta_{n}))\cdot \join= s_n$ from monoid axioms and the \emph{disconnect} axiom. 

As in the case of circuits without delays, we can show that extensionality is a congruence and we can quotient by it, creating an \emph{extensional} category of circuits with delays, $\mathbf{ECCirc}_\delta$. It is then a routine exercise to show $\mathbf{ECCirc}_\delta$ is Cartesian, with the diagonal and terminal object defined the same way as in $\mathbf{ECCirc}$.

%The extensionality property is a useful reasoning principle which can be further strengthened. Circuits with delays, even when presented with instantaneous inputs produce, in general, waveform outputs. As a result the extensionality theorem requires a quantification over all input waveforms. However, the behaviour of combinational circuits with delays is still fully determined by their reaction to instantaneous inputs. This is the main new technical result in this section. From a practical point of view it confirms that even in the presence of delays, combinational circuits can be verified simply by testing all possible input combination. The proof of the result relies on two lemmas.

To conclude the section we will prove a generalisation of the Streaming axiom which will aid the formulation of the operational semantics. 
First note is a general diagrammatic reasoning principle which holds in all free symmetric monoidal categories. 

\begin{mylem}[Staging]
	\label{lem:pipe}
	Given a free SMC over a signature $\Gamma$, any morphism $f$ can be written as a sequence of compositions 
	$f = f_0\cdot f_1\cdots f_n $
	where $f_i$ is a tensor including exactly one non-identity morphism, $f_i = m\otimes k\otimes n$. 
\end{mylem}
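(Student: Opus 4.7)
The plan is to prove the staging lemma by structural induction on the construction of morphisms in the free SMC, with the interchange law $(f \otimes g) = (f \otimes \mathrm{id}) \cdot (\mathrm{id} \otimes g)$ as the crucial tool that turns any parallel composition into a sequence of serial compositions in which only one ``side'' is active at a time.

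First I would set up the induction. A morphism of the free SMC over $\Gamma$ is generated from (i) generators $k\in\Gamma$, (ii) identities $\mathrm{id}_m$, (iii) symmetries $\g x_{m,n}$, (iv) sequential composition, and (v) tensor, quotiented by the SMC coherence equations. For the base cases, a single generator $k:m\to n$ is already a trivial staging with one stage $k = 0\otimes k\otimes 0$; symmetries $\g x_{m,n}$ are treated as generators in the same way; identities $\mathrm{id}_m$ can be taken as the empty staging (or, equivalently, absorbed into any neighbouring stage when they arise inside the inductive step). The statement of the lemma is implicitly ``up to the equations of the free SMC'', which is exactly what licences using the interchange and unit laws silently.

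For the inductive step on sequential composition, if $f = f_0\cdots f_n$ and $g = g_0\cdots g_p$ are staged decompositions obtained by induction, then the concatenation $f_0\cdots f_n\cdot g_0\cdots g_p$ is already a staged decomposition of $f\cdot g$. The non-trivial case is tensor. Given $f:a\to b$ and $g:c\to d$, I would apply the interchange law to get
\[
f\otimes g \;=\; (f\otimes \mathrm{id}_c)\cdot(\mathrm{id}_b\otimes g).
\]
By the inductive hypothesis, $f = \prod_i (m_i\otimes k_i\otimes n_i)$ and $g = \prod_j (p_j\otimes \ell_j\otimes q_j)$. Using bifunctoriality of $\otimes$ with the identity on the right,
\[
f\otimes \mathrm{id}_c \;=\; \prod_i \bigl(m_i\otimes k_i\otimes (n_i+c)\bigr),
\]
and symmetrically
\[
\mathrm{id}_b\otimes g \;=\; \prod_j \bigl((b+p_j)\otimes \ell_j\otimes q_j\bigr).
\]
Each factor on either side is still of the required shape $m\otimes k\otimes n$, so concatenating yields a staging of $f\otimes g$.

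The mild obstacle is purely bookkeeping rather than mathematical: making sure the base case for $\mathrm{id}_m$ is handled uniformly with the inductive cases (either by admitting an empty product of stages, or by ensuring that any occurrence of $f\otimes \mathrm{id}_c$ with $f$ a pure identity simply disappears via the unit laws), and being explicit that symmetries are among the generators whose single-stage occurrences are allowed. No delicate properties of $\mathbf{CCirc}_\delta$ are used — the lemma is a generic fact about free SMCs, which is exactly the form in which it will be applied later to identify redexes for the rewriting semantics.
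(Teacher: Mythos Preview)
Your proof is correct and amounts to an explicit algebraic formalisation of what the paper does in one line ``by diagrammatic reasoning'': the paper simply appeals to the string-diagram picture in which boxes can be slid horizontally past one another so that each vertical slice contains a single box, whereas you carry out structural induction on the term syntax and invoke the interchange law $(f\otimes g)=(f\otimes\mathrm{id})\cdot(\mathrm{id}\otimes g)$ explicitly. These are the same argument at two levels of abstraction --- your interchange step \emph{is} the diagrammatic sliding --- so there is no genuine methodological difference, only a difference in how much is left to the reader. Your version has the minor advantage of making the treatment of identities and symmetries explicit (the paper's picture silently absorbs symmetries into the wiring), at the cost of more bookkeeping.
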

\begin{proof}
	By diagrammatic reasoning, repeatedly rearranging the diagram as:
	
	\centering
	\includegraphics[scale=1]{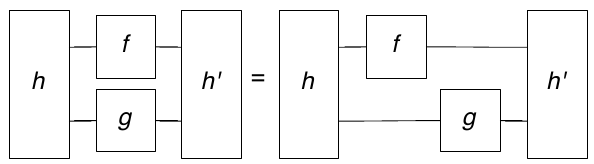}\\
\end{proof}

Let us call \textit{passive} a circuit which has no occurrences of a value. 

\begin{mylem}[Generalised Streaming]
	\label{lem:genstream}
	For any passive \textit{combinational} circuit $f:m\rightarrow n$, 
	$({\delta}^m\otimes m)\cdot\nabla_m\cdot f=( f\cdot \delta^n\otimes f)\cdot \g \nabla_n. $
%	\begin{center}
%		\includegraphics[scale=0.75]{}
%	\end{center}
\end{mylem}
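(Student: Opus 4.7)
The plan is to reduce the statement to single-generator stages via the Staging Lemma, and then verify it for each generator in isolation. Write $f = f_0\cdot f_1\cdots f_\ell$ with each stage $f_i = p_i\otimes k_i\otimes q_i$ and $k_i$ a passive basic morphism (gate, fork, join, stub, or symmetry). First I would verify that the generalised streaming equation is closed under sequential composition: if $g:m\to n$ and $h:n\to p$ each satisfy it, then so does $g\cdot h$, by a short calculation that uses functoriality of $\otimes$ to rewrite $(g\cdot\delta^n)\otimes g = (g\otimes g)\cdot(\delta^n\otimes n)$ and then applies the two hypotheses in succession. An analogous but slightly more delicate argument, decomposing $\nabla_{p+s+q}$ as a permutation composed with $\nabla_p\otimes\nabla_s\otimes\nabla_q$ and using the trivial case of identities, then reduces the problem to verifying the equation for each individual generator $k_i$.

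The structural generators are handled by routine diagrammatic reasoning. For $k=\fork$, the equation follows from the bialgebra law $\join\cdot\fork = (\fork\otimes\fork)\cdot\nabla_2$ combined with the timelessness equation $\fork\cdot\delta^2 = \delta\cdot\fork$. The case $k=\join$ is essentially dual, using commutativity and associativity of join together with the timelessness equation $\join\cdot\delta = (\delta\otimes\delta)\cdot\join$. The case $k=\g w$ collapses using the counit-type bialgebra law $\join\cdot\g w = \g w\otimes\g w$ together with the unobservable-delay axiom $\delta\cdot\g w = \g w$. Symmetries add no extra content beyond the SMC coherence equations.

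The main obstacle is the gate case $k:m\to 1$, because the Streaming axiom is stated only for inputs that are tuples of values $\mathbf{v}:0\to m$, whereas the generalised form requires an arbitrary identity bus $\mathrm{id}_m$ on the ``instant'' side. I would bridge this gap using the Extensionality theorem (Theorem~\ref{thm:ext}): two morphisms in $\mathbf{ECCirc}_\delta$ coincide iff they produce the same output on every waveform input. For an arbitrary input $\mathbf{s}_1\otimes\mathbf{s}_2:0\to 2m$, proceed by induction on the length of the longer of the two waveform buses. The base case, where both $\mathbf{s}_i$ are tuples of values, reduces directly to the Streaming axiom, applied with $\mathbf{s}_2$ playing the role of $\mathbf{v}$ and then pre-composed on the remaining wires with $\mathbf{s}_1$. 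In the inductive step, unfold each waveform according to the recursion $s_{n+1} = (s_n\cdot\delta\otimes v_{n+1})\cdot\join$, use Streaming to peel off the current values as in the base case, and invoke the inductive hypothesis on the delayed tails. The timelessness axiom is what allows the delays generated during peeling to be harmlessly pushed back across the gate, so the inductive hypothesis applies to a strictly shorter instance.
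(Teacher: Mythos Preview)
Your overall strategy---staging, closure under sequential composition, then verification for each generator---is precisely the skeleton of the paper's own (one-sentence) argument, and you are right to flag the gate case as the non-trivial one: the Streaming axiom only supplies $(\delta^m\otimes\mathbf v)\cdot\nabla_m\cdot k=(\delta^m\cdot k\otimes\mathbf v\cdot k)\cdot\nabla_1$ with \emph{values} on the instant side, not an arbitrary identity bus~$m$.

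The difficulty is that your waveform-length induction cannot close that gap, because the equation as literally stated is false in the stream model~$\Scat$ of Section~\ref{app:snd}. Take any unary gate~$k$ that is monotone but not a $\sqcup$-homomorphism, for instance $k(\bot)=\bot$, $k(t)=k(f)=t$, $k(\top)=\top$ on the four-element lattice. Feeding the two sides streams $s_1,s_2$ with $s_1[0]=t$ and $s_2[1]=f$, position~$1$ of the left-hand side is $k(s_1[0]\sqcup s_2[1])=k(\top)=\top$, whereas the right-hand side gives $k(s_1[0])\sqcup k(s_2[1])=t\sqcup t=t$. Concretely, after you peel off the head values in your inductive step you are left needing $(\mathbf s_1\otimes\mathbf t_2)\cdot\nabla_m\cdot k=(\mathbf s_1\cdot k\otimes\mathbf t_2\cdot k)\cdot\nabla_1$ for the tails, which is exactly the (false) assertion that $k$ preserves joins; the induction therefore cannot go through. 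The paper's own proof sketch (``immediate from Streaming and Timelessness'') skates over the same point. The form of the lemma actually \emph{used} in Figure~\ref{fig:tracedelay}, step~(4), has concrete values on the non-delayed tensor factor; with $\mathbf v$ in place of the identity~$m$, the gate case \emph{is} literally Streaming plus one application of Timelessness (to rewrite $\delta^m\cdot k$ as $k\cdot\delta$), and your staging and compositional-closure arguments then carry the full result without any waveform induction.
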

\begin{proof}
	We make an extensional argument and use the Staging Lemma (\ref{lem:pipe}) and  induction on the number of stages. Each stage is combinational. For each stage, the property is immediate from Streaming and Timelessness. 
\end{proof}

%The main new result of the section is this:
%	FALSE
%\begin{mythm}
%	[Combinatorial characterisation]\label{thm:combchar}
%	For any morphisms $f,g\in \mathbf{ECCirc}_\delta$ if for all values $ \mathbf v$, $\mathbf v\cdot f=\mathbf v\cdot g$ then $f=g$. 
%\end{mythm}
%\begin{proof}
%	We reason extensionally, so we need to show that for all streams $\mathbf s$, $\mathbf s\cdot f=\mathbf s \cdot g.$ We apply induction on the size of the waveform $\mathbf s$. The basis (waveform of size 1) follows from the theorem hypothesis. The inductive step is by passification, followed by the generalised streaming lemma (Lem.~\ref{lem:genstream}).
%\end{proof}

\subsection{Circuits with feedback}	

%We can now introduce feedback.

\begin{mydef}
	Let $\mathbf{CCirc}_\delta^*$ be the category obtained from $\mathbf{ECCirc}_\delta$ by freely adding a trace operator. 
\end{mydef}
Diagrammatically, the trace operator applied to a diagram $f:m+k\rightarrow n+k$ corresponds to a feedback loop of width $k$, written $\tracei k{f}:m\rightarrow n$. Symmetric traced monoidal categories (STMC) satisfy a number of equations (coherences) which we will not enumerate for lack of space~\cite{joyal1996traced}. As before, their interpretation coincides with equality of diagrams (with feedbacks) up to graph isomorphism. 
%For example, of particular interest is the axiom ``\emph{yanking}'' a loop into a straight line:
%\begin{center}
%	\includegraphics[scale=0.75]{}\\
%	$\tracei m{\g x_{m,m}}=m$.
%\end{center}
%This is indeed an axiom that indicates that, conceptually, we are on the right track. The swapping of two wires is a trivial combinational circuit, and applying a trace creates a (false) feedback loop which can be simply eliminated. 

As before, we are committed to an extensional view of circuits where the only observable is the input-output behaviour. In combinational circuits, with or without delays, the only way we can create a circuit with 0 outputs is by explicitly composing a circuit $f:m\rightarrow n$ with $\g w^n$. However, 0-output circuits can arise in more complicated ways in the presence of feedback, whenever all the outputs are fed back. For example, the diagram on the left can be reduced to just three unobserved inputs:
\begin{center}
	\includegraphics[scale=1]{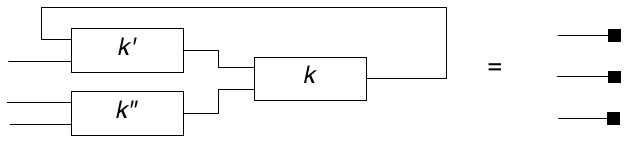}
\end{center}
Such equalities cannot be proved out of local interactions, so we will simply impose the equivalence of all 0-output circuits, an equivalence which is trivially a congruence. The new quotient category is called $\mathbf{OCirc}_\delta^*$. In this category all diagrams of shape $f:m\rightarrow 0$ are therefore equal which, categorically speaking, makes 0 a ``\emph{terminal object}''.

In general, in programs feedback corresponds to recursion and iteration, and syntactic models (operational semantics) of such programs involve creating two copies of the code recursed over. For example, the operational semantics of the Y-combinator as applied to some $G$ is 
$
YG = G(YG)
$.
A similar rule does not exist in general for SMTCs unless the category is also Cartesian. Such categories, also called \emph{data-flow categories}~\cite{cuazuanescu1990towards}, admit an \emph{iterator} defined for any $f:m+n\rightarrow n$:
\begin{center}
	\includegraphics[]{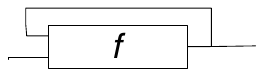}\\
	$\iter^n (f)=\tracei n{f\cdot(\Delta_n\otimes n)}:m\rightarrow n$
\end{center}
which satisfies the following equations:\\[1ex]
\emph{Naturality}:
$\iter((g\otimes n)\cdot f)=g\cdot\iter(f)$ for any $g:k\rightarrow m$.
\hfill \includegraphics[]{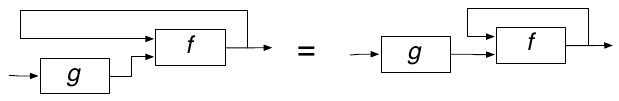}

\noindent\emph{Iteration}:
$\iter(f)=\langle m,\iter(f)\rangle\cdot f$ \hfill \includegraphics[]{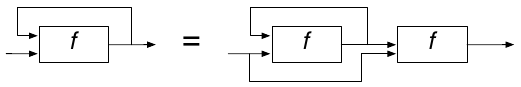} 

\noindent\emph{Diagonal}:
$\iter^n(\iter^n(f))= \iter^n((\langle n,n\rangle\otimes m)\cdot f)$.  \hfill \includegraphics[]{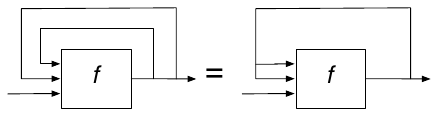}

Naturality justifies treating diagrams containing the iterator graphically, just as we are used to from SMTCs. Let us also state formally that the precondition for the iterator is fulfilled by the SMTC of circuits with feedback:

\begin{mythm}[\cite{GhicaJ16}]\label{thm:ocirccart}
	The category $\mathbf{OCirc}_\delta^*$ is Cartesian with diagonal~$\Delta_n$.
\end{mythm}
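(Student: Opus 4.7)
My plan is to verify the two defining conditions of a Cartesian structure on $\mathbf{OCirc}_\delta^*$ separately. First, $0$ is terminal: this is immediate from the quotient defining $\mathbf{OCirc}_\delta^*$, in which every morphism $h: m \to 0$ is identified with $\g w^m$. Second, each object $n$ must carry a commutative comonoid structure $(\Delta_n, \g w^n)$ such that every morphism is a coalgebra homomorphism. The comonoid axioms (cocommutativity, coassociativity, counitality) involve only morphisms of $\mathbf{ECCirc}_\delta$ and so are inherited directly from its Cartesian structure.

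The substantive content is the naturality law $h \cdot \Delta_n = \Delta_m \cdot (h \otimes h)$ for every $h: m \to n$ (the companion law $h \cdot \g w^n = \g w^m$ reduces to terminality of $0$). I would prove this by structural induction on the construction of $h$ in $\mathbf{CCirc}_\delta^*$. The base cases (values, gates, delays, structural morphisms) already hold in $\mathbf{ECCirc}_\delta$, while sequential composition, parallel composition, and symmetry propagate naturality by routine diagrammatic manipulation.

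The crux, and the main obstacle, is the trace case: given $f: m+k \to n+k$ satisfying naturality of $\Delta$, derive $\tracei k f \cdot \Delta_n = \Delta_m \cdot (\tracei k f \otimes \tracei k f)$. I would rewrite the right-hand side by applying the \emph{superposing} trace axiom together with suitable symmetries to present $\tracei k f \otimes \tracei k f$ as a single trace of $f \otimes f$ with feedback of width $k+k$; then use the inductive hypothesis on $f$ to push $\Delta_{m+k}$ through $f$ inside the trace; and finally collapse the doubled feedback of width $k+k$ back into a single width-$k$ feedback. This last step is precisely what Proposition~\ref{prop:pseudo} licenses: since the two paired fed-back cables each carry the outputs of two copies of the same morphism applied to duplicated inputs, inserting $\join \cdot \fork$ between the paired wires acts as the identity, which is exactly the move that fuses the two loops into one. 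Cleaning up with the \emph{sliding} and \emph{yanking} trace axioms then yields $\tracei k f \cdot \Delta_n$.

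The delicate point is precisely this feedback-collapse. In a generic STMC, $\tracei{k+k}(f \otimes f)$ is \emph{not} equal to $\Delta_m$ composed with $\tracei k f$, which is why the free trace extension $\mathbf{CCirc}_\delta^*$ is not itself Cartesian. The quotient that produces $\mathbf{OCirc}_\delta^*$, together with Proposition~\ref{prop:pseudo}, is what rescues the argument by supplying the identity $\join \cdot \fork = 1$ in exactly the shared-duplicate context that arises at the collapsed feedback cut.
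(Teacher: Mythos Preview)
The paper does not actually prove this theorem; it is stated with a citation to the companion paper~\cite{GhicaJ16}, so there is no in-text argument to compare your proposal against.

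Your outline is the natural one, and you correctly isolate the trace case as the only substantive step and Proposition~\ref{prop:pseudo} as the extra ingredient beyond the bare STMC axioms. However, the trace step as you describe it is circular. After superposing and tightening, the right-hand side becomes
\[
\Delta_m\cdot(\tracei k f)^2 \;=\; \tracei{2k}\bigl((\Delta_m\otimes 2k)\cdot\sigma\cdot(f\otimes f)\cdot\sigma'\bigr),
\]
where the $m$-inputs to the two copies of $f$ are shared via~$\Delta_m$ but the two width-$k$ feedback inputs are \emph{independent} wires. To invoke the inductive hypothesis $f\cdot\Delta_{n+k}=\Delta_{m+k}\cdot(f\otimes f)$ you must first manufacture a $\Delta_k$ on those feedback wires; yet the only device you cite for inserting one is Proposition~\ref{prop:pseudo}, whose hypothesis is precisely that \emph{all} inputs to $f\otimes f$ are already duplicated, including the $k$-part. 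So ``apply the IH'' and ``collapse via Prop.~\ref{prop:pseudo}'' each presuppose the other. (Note also that $\Delta_k\cdot\nabla_k=k$ goes the wrong way round, while $\nabla_k\cdot\Delta_k$ on the paired feedback is \emph{not} the identity in general, so one cannot simply insert it for free.)

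Closing this loop needs something beyond structural induction plus Prop.~\ref{prop:pseudo} as stated. The argument in~\cite{GhicaJ16} proceeds via extensional reasoning; an alternative is to unfold the trace, initialising the feedback at~$\bot^k$ (which \emph{is} a duplicated value, so Prop.~\ref{prop:pseudo} then applies), and argue that finite unfoldings determine the morphism. Either way, your sketch should make explicit which additional principle discharges this step, because the purely inductive argument you outline does not go through on its own.
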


%\begin{mythm}[Soundness (App.~\ref{app:snd})]
%	The category of Scott-continuous maps on $\mathbf{V}$-valued streams ($\mathbf{V}^{\mathbb{N}}$) is a model for $\mathbf{OCirc}_\delta^*$.
%\end{mythm}

%\subsection{The models}
% more macros
\newcommand{\V}{\mathbf{V}}
\newcommand{\Vcat}{\mathcal{V}}
\renewcommand{\S}{\mathbf{S}}
\newcommand{\Scat}{\mathcal{S}}
\newcommand{\sem}[1]{\llbracket{#1}\rrbracket}
\newcommand{\semV}[1]{\llbracket{#1}\rrbracket^{\Vcat}}
\newcommand{\semS}[1]{\llbracket{#1}\rrbracket^{\Scat}}
\newcommand{\cv}{\underline{v}}
\newcommand{\cbot}{\underline{\bot}}
\newcommand{\fix}{\mathrm{fix}}

\subsection{A model for $\mathbf{OCirc}_\delta^*$}\label{app:snd}
To conclude the section, we sketch the construction of a model for $\mathbf{OCirc}_\delta^*$ which will confirm the intuition we have alluded to above. It needs to be Cartesian and support the delay operator and iteration. The usual example of a traced SMC, sets and relations, is not Cartesian so a slightly more complex construction is required. We start with a basic model for combinational circuits based on the lattice $\V$ of values (Def.~\ref{def:circ}).

\begin{mydef}
	Let $\Vcat$ be the category whose objects are finite powers of~$\V$ and whose morphisms are monotone maps.
\end{mydef}

\begin{myprop}
	$\Vcat$ is cartesian, i.e., it has finite products.
\end{myprop}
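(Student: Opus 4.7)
The plan is to exhibit explicitly the terminal object and the binary products, checking both that they live in $\Vcat$ (i.e.\ that they are finite powers of~$\V$) and that the expected universal properties hold with respect to monotone maps.

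First I would take $\V^0$, the one-element lattice, as the terminal object. For any finite power $\V^n$ there is a unique map into $\V^0$, namely the constant map, which is vacuously monotone. Uniqueness is immediate because the codomain is a singleton.

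Next, for binary products I would propose $\V^m\times\V^n:=\V^{m+n}$, equipped with the componentwise order inherited from taking $m+n$ copies of the lattice~$\V$. The two projection morphisms $\pi_1:\V^{m+n}\to\V^m$ and $\pi_2:\V^{m+n}\to\V^n$ are the evident coordinate projections; monotonicity is immediate since the order on $\V^{m+n}$ is defined componentwise. For the universal property, given monotone $f:\V^k\to\V^m$ and $g:\V^k\to\V^n$, I would define the pairing $\langle f,g\rangle:\V^k\to\V^{m+n}$ coordinatewise, i.e.\ as $(f(x),g(x))$. Monotonicity of $\langle f,g\rangle$ follows at once from monotonicity of $f$ and $g$ together with the componentwise order on $\V^{m+n}$. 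The equations $\pi_1\circ\langle f,g\rangle=f$ and $\pi_2\circ\langle f,g\rangle=g$ hold by construction, and uniqueness follows because any monotone map $h:\V^k\to\V^{m+n}$ satisfying these equations is forced to be $\langle \pi_1\circ h,\pi_2\circ h\rangle$ by how tuples in $\V^{m+n}$ are determined by their coordinates.

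I do not anticipate a real obstacle: the statement is essentially a verification that the full subcategory of posets and monotone maps on the objects $\{\V^n\}_{n\in\N}$ is closed under the finite products that exist in the ambient category of posets. The only thing to keep in mind is the bookkeeping that $\V^m\times\V^n$ and $\V^{m+n}$ are literally the same poset under the componentwise order, so the product of two objects of $\Vcat$ stays in $\Vcat$ rather than escaping to a larger category. With the terminal and binary product cases handled, finite products of arbitrary arity follow by induction.
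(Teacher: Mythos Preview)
Your argument is correct and is exactly the standard verification one would give. The paper itself states this proposition without proof, treating it as routine; your write-up supplies precisely the details one would expect (terminal object $\V^0$, binary product $\V^m\times\V^n\cong\V^{m+n}$ with componentwise order, monotone projections and pairing), so there is nothing to compare against and nothing to add.
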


For each basic gate~$k$, $\Vcat$ contains the concrete monotone function describing its input-output behaviour. The special morphisms \emph{join}, \emph{fork}, and \emph{stub} are represented by $\sqcup$, the diagonal into the product, and the unique map into the one-element lattice $\V^0$, respectively; all of these are monotone. Since $\mathbf{CCirc}$ is freely generated by the basic gates, we have:

\begin{myprop}
	There is a unique monoidal functor $\semV\cdot$ from $\mathbf{CCirc}$ to $\Vcat$ mapping the object $1$ of the former to $\V$ of the latter.
\end{myprop}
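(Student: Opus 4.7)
The plan is to invoke the universal property of the free symmetric monoidal category $\mathbf{CCirc}$ presented by the signature $\mathbf{Circ}$ modulo the equations of Definition~\ref{def:ccirc}. A monoidal functor out of such a presented SMC is uniquely determined by (i) the image of the generating object and (ii) the images of the generating morphisms, provided these images jointly satisfy the equations of the presentation in the target category. Fixing $1\mapsto\V$, the monoidal structure forces $m\mapsto\V^m$ for every $m\in\N$, which coincides with the objects of $\Vcat$.

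Next I would specify the images of the generating morphisms. A value $v:0\to 1$ is sent to the constant map $\V^0\to\V$ picking out $v$; a gate $k:m\to 1$ is sent to the monotone function prescribed by its axiomatisation (well-defined because the Gate axioms pin $k$ on every tuple of levels and enforce monotonicity); $\join$ is sent to the lattice join $\sqcup:\V\times\V\to\V$; $\fork$ is sent to the diagonal $\V\to\V\times\V$; and $\g w$ is sent to the unique map $\V\to\V^0$. All of these are monotone and hence legitimate morphisms of $\Vcat$.

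It then remains to check that the four equation schemes of Definition~\ref{def:ccirc} are preserved. For \textbf{Fork}, $(v,v)$ is indeed the image of $v$ under the diagonal. For \textbf{Join}, $\sqcup(v,v')=v\sqcup v'$ is a tautology. For \textbf{Stub}, the equation holds automatically because $\V^0$ is terminal in $\Vcat$, so any two parallel morphisms into it agree. For \textbf{Gate}, both the pointwise specification of the output and the monotonicity condition are, by construction, built into the chosen monotone function. Uniqueness then follows at once: any two monoidal functors agreeing on $1$ and on the generating morphisms agree on every tensor--composition of generators and hence on all of $\mathbf{CCirc}$. The main difficulty, such as it is, is purely bookkeeping---verifying that the interpretation of a gate is genuinely monotone as a map $\V^m\to\V$, which is exactly what the monotonicity clause of the Gate axiom guarantees; with that observation the proposition is an immediate instance of the universal property of the free SMC modulo equations.
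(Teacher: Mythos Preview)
Your proposal is correct and follows essentially the same approach as the paper: invoke the universal property of the free symmetric monoidal category, specify where the generating object and generating morphisms go, and verify that the defining equations hold in~$\Vcat$. If anything, you are more explicit than the paper itself, which simply states the interpretations of the generators, appeals to freeness, and relegates the equation check to a one-line remark after the proposition.
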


It is a trivial exercise to check that the equations given in subsection~2.1 are satisfied in~$\Vcat$, so $\semV\cdot$ factors through the extensional category $\mathbf{ECCirc}$.

To model the delay operators, we replace $\V$ by $\S=\V^\N$, the $\N$-indexed product of $\V$ with itself.
%\footnote{For negative delays we would obviously consider $\V^{\mathbb{Z}}$.} 
Its elements are \emph{streams} of elements of~$\V$. This is an infinite lattice, the order being defined componentwise, and it makes sense therefore to stipulate that morphisms preserve not just the order but also suprema of directed sets. Thus we have the category~$\Scat$ of finite powers of~$\S$ with Scott-continuous maps between them. As before, we immediately obtain that $\Scat$ is cartesian. The interpretation of gates and special morphisms is \emph{component-wise}:
$\semS{k}(s_1,\ldots,s_m)[n]=\semV{k}(s_1[n],\ldots,s_m[n])$, where we are reusing the semantics of gates inside~$\Vcat$. Values ($=$~levels) are interpreted as those streams that have the corresponding semantic value in first position and bottoms after that, which we write as~$\cv$. The following is obvious because the gate operations are defined componentwise:
\begin{myprop}
	The semantic gate functions $\semS{k}$ are Scott-continuous.
\end{myprop}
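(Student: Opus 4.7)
The plan is to unfold the Scott-continuity requirement against the componentwise definition and reduce it to a trivial statement about finite lattices. Let $D = \{(s_1^\alpha, \ldots, s_m^\alpha)\}_{\alpha \in A}$ be a directed family in $\S^m$ with supremum $(s_1^*, \ldots, s_m^*)$. I need to show that $\semS{k}(s_1^*, \ldots, s_m^*) = \bigsqcup_{\alpha \in A}\semS{k}(s_1^\alpha, \ldots, s_m^\alpha)$.

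First I would observe that suprema in the product order on $\S = \V^\N$ are computed componentwise, so for every index $n \in \N$ we have $s_i^*[n] = \bigsqcup_{\alpha} s_i^\alpha[n]$, and the family $\{(s_1^\alpha[n], \ldots, s_m^\alpha[n])\}_\alpha$ is directed in $\V^m$. Then using the defining equation $\semS{k}(s_1,\ldots,s_m)[n] = \semV{k}(s_1[n],\ldots,s_m[n])$, the desired identity at index $n$ reduces to
\[
\semV{k}\Bigl(\bigsqcup_\alpha s_1^\alpha[n], \ldots, \bigsqcup_\alpha s_m^\alpha[n]\Bigr) = \bigsqcup_\alpha \semV{k}(s_1^\alpha[n], \ldots, s_m^\alpha[n]),
\]
i.e.\ Scott-continuity of the underlying function $\semV{k}$ in $\Vcat$.

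The key observation making this last step trivial is that the signature $\mathbf{Circ}$ has only a finite set of morphisms (Def.~\ref{def:circ}), so the lattice $\V$ of levels is finite, and hence so is $\V^m$. In any finite poset every directed subset has a maximum element, so every monotone map whose domain is a finite lattice is automatically Scott-continuous; this applies to $\semV{k}$, which is monotone by construction in $\Vcat$. Piecing the componentwise equalities back together yields Scott-continuity of $\semS{k}$.

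There is no real obstacle: the only thing worth flagging is the dependence on finiteness of~$\V$, which is a standing assumption of the signature rather than something extra to prove. If one later wished to relax that assumption and admit infinite value lattices, the same argument would go through provided one builds Scott-continuity of $\semV{k}$ into the $\Vcat$-level interpretation from the start, but under the hypotheses actually adopted in the paper the proof is a one-line unfolding plus the finite-lattice remark.
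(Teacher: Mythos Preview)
Your proposal is correct and is precisely the unfolding of what the paper leaves implicit: the paper offers no proof beyond the remark that the statement ``is obvious because the gate operations are defined componentwise.'' Your reduction to Scott-continuity of $\semV{k}$ via componentwise suprema, followed by the observation that finiteness of $\V$ (hence of $\V^m$) makes every monotone map trivially Scott-continuous, is exactly the justification that sentence is gesturing at.
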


{\textit{Delay.}}
In the stream model we interpret delay by \emph{shift}:
\begin{displaymath}
\semS{\delta}(s)[n]=\left\{\begin{array}{ll}\bot&\text{if }n=0\\s[n-1]&\text{otherwise}\end{array}\right.
\end{displaymath}
Again, it is easily seen that this is a Scott-continuous operation from $\S$ to~$\S$. 

The soundness of the \emph{timelessness} rule (Def.~\ref{def:ccircd}) is now easily checked; it holds because gates act componentwise on streams and whether we shift the input streams or the output streams is immaterial. \emph{Streaming} is more interesting. In the case of $m=1$, we want to show
\begin{displaymath}
\semS{\delta\mathord{\otimes}v\cdot\join\cdot k}=\semS{(\delta\cdot k)\mathord{\otimes}(v\cdot k)\cdot\join}
\end{displaymath}
so let $v\in\V$ be a value and $s\in\S$ a stream. Consider first the head position of the output stream. The left hand side evaluates as
\begin{displaymath}
\begin{array}{rcl}
\semS{\delta\mathord{\otimes}v\cdot\join\cdot k}(s)[0] 
&=& \semS{\join\cdot k}(\semS\delta(s),\cv)[0] \\
&=& \semS{k}(\semS\delta(s)\sqcup\cv)[0] \\
&=& \semV k(\bot\sqcup v)\\
&=& \semV k(v)
\end{array}
\end{displaymath}
and the right hand side as
\begin{displaymath}
\begin{array}{rcl}
\semS{(\delta\cdot k)\mathord{\otimes}(v\cdot k)\cdot \join}(s)[0] 
&=& \sqcup(\semV k(\bot),\semV k(v))\\
&=& \semV k(v)
\end{array}
\end{displaymath}
where the last equality holds because by monotonicity we know that $\semV k(\bot)\sqsubseteq\semV k(v)$ and hence the supremum is the larger of the two. The same argument is used for all other positions $n>0$:
\begin{displaymath}
\begin{array}{rcl}
\semS{\delta\mathord{\otimes}v\cdot\join\cdot k}(s)[n] 
&=& \semV k(s[n-1]\sqcup\bot)\\
&=& \semV k(s[n-1])\\
&=& \semV k(s[n-1])\sqcup\semV k(\bot)\\
&=& \semS{(\delta\cdot k)\mathord{\otimes}(v\cdot k)\cdot\join}(s)[n]
\end{array}
\end{displaymath}

It is clear from the semantics of the delay operator that a circuit with $n$ delay operators takes account of (at most) the inputs ${s[m-n]},\ldots,{s[m-1]},{s[m]}$ when computing the $m$-th output value. Thus we have the following full characterisation of extensional equality:

\begin{myprop}\label{prop:exp}
	Two circuits $f$ and~$g$, which both contain no more than $n$ delay operators, are observationally equivalent if and only if they produce the same outputs for all waveforms of length up to~$n+1$.
\end{myprop}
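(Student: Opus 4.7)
The strategy is to leverage the causality remark stated just before the proposition: a circuit containing at most $n$ occurrences of $\delta$ has the property that its output value at time $m$ depends only on input values at times $m-n, m-n+1, \ldots, m$, with the convention that input values at negative times are $\bot$. First I would establish this causality claim as a separate lemma, by structural induction on the syntactic form of a morphism. The base cases of values, gates, and structural morphisms act pointwise on streams in the definition of $\semS{\cdot}$, so their dependency window at time $m$ is the single point $\{m\}$, consistent with $n=0$ delays; the delay operator shifts by exactly one position, so its window is $\{m-1\}$, consistent with $n=1$. For the inductive step, parallel composition takes the union of windows and so inherits the worst case of its components, while sequential composition chains delays, giving a combined window of width at most $n_f + n_g$.

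The forward direction of the proposition is immediate, since agreement of $\semS{f}$ and $\semS{g}$ on all stream inputs entails agreement on the streams representing short waveforms. For the converse, suppose $f$ and $g$ each contain at most $n$ delays and agree on every waveform of length at most $n+1$. Fix an arbitrary input stream $s$ and a position $m$; it suffices to show $\semS{f}(s)[m] = \semS{g}(s)[m]$. I would construct a waveform $w$ of length $n+1$ whose stream interpretation satisfies $w[k] = s[m-n+k]$ for $k = 0, 1, \ldots, n$, interpreting $s[j]$ as $\bot$ whenever $j < 0$. Such a $w$ can be assembled directly from the recursive construction $s_{n+1} = (s_n \cdot \delta \otimes v_{n+1}) \cdot \join$ given before Theorem~\ref{thm:ext}. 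By the causality lemma applied to $f$, the value $\semS{f}(s)[m]$ depends only on the window $s[m-n], \ldots, s[m]$, and $\semS{f}(w)[n]$ depends only on $w[0], \ldots, w[n]$; since these windows coincide by construction, $\semS{f}(s)[m] = \semS{f}(w)[n]$, and the same argument gives $\semS{g}(s)[m] = \semS{g}(w)[n]$. Because $w$ is a waveform of length at most $n+1$, the standing hypothesis yields $\semS{f}(w) = \semS{g}(w)$, and chaining the three equalities produces $\semS{f}(s)[m] = \semS{g}(s)[m]$.

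The main obstacle is establishing the causality lemma precisely, especially if the proposition is intended to cover circuits of $\mathbf{OCirc}_\delta^*$ with feedback rather than only the delay fragment $\mathbf{ECCirc}_\delta$. For combinational-plus-delay morphisms the structural induction sketched above is routine, and the dependency window can be read off directly from the syntax. In the presence of a trace, the output of $\tracei{k}{h}$ is interpreted as a Scott fixpoint, and one would need to argue that every traversal of the feedback loop is guarded by at least one delay, so that the $(n{+}1)$-fold approximation of the fixpoint already exhausts the dependency window. This productivity condition is precisely what distinguishes benign feedback from circuits that remember arbitrarily far into the past, and pushing the causality induction through the trace case is the technically delicate step in a fully general argument.
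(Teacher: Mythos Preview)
Your proposal is correct and follows the same line as the paper, which in fact offers no formal proof at all: the paper simply observes, in the sentence immediately preceding the proposition, that a circuit with $n$ delay operators computes its $m$-th output from at most the inputs $s[m-n],\ldots,s[m]$, and declares the proposition to follow. Your structural induction on morphisms and the window-shifting argument make this precise in exactly the way one would expect.

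Your worry about the trace case is misplaced, however. Proposition~\ref{prop:exp} is stated for circuits in $\mathbf{ECCirc}_\delta$, i.e.\ the delay fragment \emph{without} feedback; this is clear from its placement in the text (before the ``Feedback'' paragraph) and from the fact that the very next result, Proposition~\ref{prop:exp-exp}, gives the analogous bound $|\mathbf V|^n+1$ for circuits \emph{with} feedback via a separate finite-state argument. So you do not need to push the causality induction through $\mathrm{Tr}$, and your sketch for the compositional cases (values, gates, structural morphisms, $\delta$, $\otimes$, $\cdot$) already covers everything required.
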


% As we can easily construct finite waveforms (as explained above) in our diagrammatic language, observational equivalence can be tested effectively (but testing is exponential in the number of delay operators).

{\textit{Feedback}.}
Iteration can be added to $\Vcat$ already. Let $f\colon\V^{m+n}\to\V^n$ be a monotone map. Define $\fix(f)\colon\V^m\to\V^n$ by
$
\fix(f)(\bar a)=\sSup_{k\in\N}\Phi(k,f,\bar a)
$,
where $\Phi(0,f,\bar a)=\bar\bot$ and $\Phi(k+1,f,\bar a)=f(\bar a,\Phi(k,f,\bar a))$ (and $\bar a$ is short for
$a_1,\ldots,a_m$). The definition of $\fix(f)$ in $\Scat$ is very similar, the only difference being that we start the iteration with the constant bottom stream $\cbot$ on each of the $n$~inputs.

In interpreting circuits with feedback, we let $\fix$ be the semantics of~$\iter$. 
\begin{myprop}
	$\fix$ satisfies the three equations for the iterator.
\end{myprop}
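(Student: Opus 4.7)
The plan is to verify the three iterator axioms directly from the Kleene definition $\fix(f)(\bar a) = \sSup_k \Phi(k,f,\bar a)$, where $\Phi(0,f,\bar a) = \bar\bot$ and $\Phi(k+1,f,\bar a) = f(\bar a, \Phi(k,f,\bar a))$. Since the arguments rely only on pointedness of the underlying lattice, Scott-continuity of all morphisms, and monotonicity of $f$ in each argument, the same reasoning applies uniformly to $\Vcat$ and $\Scat$, so I would present it once in generic notation, using the Cartesian structure of these categories already noted in the paper.

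For \emph{Naturality}, given $g : k \to m$ and $f : m+n \to n$, let $F = (g\otimes n)\cdot f$, so that $F(\bar c, \bar u) = f(g(\bar c), \bar u)$. A straightforward induction on $i$ shows $\Phi(i, F, \bar c) = \Phi(i, f, g(\bar c))$, since both sequences satisfy the same recurrence once $g(\bar c)$ is substituted and both start at $\bar\bot$; passing to directed suprema gives $\fix(F)(\bar c) = \fix(f)(g(\bar c))$, which is exactly $\iter((g\otimes n)\cdot f) = g\cdot\iter(f)$ in diagrammatic order. For \emph{Iteration}, Scott-continuity of $f$ in its fed-back argument yields
\[
f(\bar a, \fix(f)(\bar a)) = \sSup_k f(\bar a, \Phi(k,f,\bar a)) = \sSup_k \Phi(k+1,f,\bar a) = \fix(f)(\bar a),
\]
where the final step re-indexes and absorbs the $k=0$ term $\bar\bot$ into the supremum. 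Categorically this reads $\iter(f) = \langle m, \iter(f)\rangle \cdot f$.

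The \emph{Diagonal} axiom is the classical Bekic identity: iterating twice over two separate $n$-typed feedback lines coincides with iterating once over a diagonalised single $n$. I would prove agreement of the two sides by mutual pointwise domination. Setting $F = (\langle n,n\rangle\otimes m)\cdot f$, each approximant of $\fix(\fix(f))(\bar a)$ is obtained by finitely many applications of $\fix(f)$ to $\bar\bot$-seeds, and $\fix(f)(\bar a, \bar u)$ is itself a supremum of approximants of $f$ starting from $\bar\bot$; by a nested induction and monotonicity of $f$, the resulting value is dominated by $\Phi(j, F, \bar a)$ for sufficiently large $j$, and conversely each $\Phi(j, F, \bar a)$ is dominated by some approximant of the double iteration. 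This is the main obstacle: not because the underlying mathematics is deep — it is the textbook reduction of simultaneous to nested least fixed points — but because matching the categorical presentation involving $\Delta_n$ and the $\iter^n\circ\iter^n$ bracketing to the concrete Kleene iteration requires careful bookkeeping of which copy of $n$ is fed back at which stage. Naturality and Iteration, by contrast, follow essentially immediately from the definition of $\fix$ and Scott-continuity.
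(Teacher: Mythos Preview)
Your proposal is correct and matches the paper's approach almost exactly: Naturality by termwise induction on the approximants, Iteration by Scott-continuity and reindexing the supremum, and Diagonal identified as the Beki\v{c} identity. The only presentational difference is that for Diagonal the paper simply invokes Beki\v{c}'s rule as a known result relating simultaneous and nested least fixed points, whereas you sketch a direct proof via mutual domination of approximants; either is fine, and your extra bookkeeping is sound but not required here.
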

\noindent\emph{Remark.} This fact is mentioned (without proof) in Example~7.1.2 of \cite{hasegawa2012models} but for the convenience of the reader we include the argument.
\begin{proof}
        The first equation, \emph{Naturality}, is a straightforward calculation:
        \begin{displaymath}
          \fix((g\otimes n)\cdot f)(\bar a)=\sSup_{k\in\N}\Phi(k,(g\otimes n)\cdot f,\bar a)
        \end{displaymath}
        \begin{displaymath}
          (g\cdot\fix(f))(\bar a)=\fix(f)(g(\bar a))=\sSup_{k\in\N}\Phi(k,f,g(\bar a))
        \end{displaymath}
        The two chains on the right are term-wise identical as one sees by induction over~$k$: For $k=0$ we get $\bar\bot$ in both cases and for $k+1$ we have
        \begin{displaymath}
          \Phi(k+1,(g\otimes n)\cdot f,\bar a)=((g\otimes n)\cdot f)(\bar a,\Phi(k,(g\otimes n)\cdot f,\bar a))=f(g(\bar a),\Phi(k,(g\otimes n)\cdot f,\bar a))\quad\mbox{and}
        \end{displaymath}
        \begin{displaymath}
          \Phi(k+1,f,g(\bar a))=f(g(\bar a),\Phi(k,f,g(\bar a)))
        \end{displaymath}
        which are equal by induction hypothesis.

	The second equation, \emph{Iteration}, holds because all maps in $\Scat$ are Scott-continuous and therefore $\sSup_{k\in\N}g^k(\bar\bot)$ is a fixpoint of~$g$:
	\begin{displaymath}
	f(\bar a,\fix(f)(\bar a))= g(\sSup_{k\in\N}g^k(\bar\bot))=\sSup_{k\in\N}g^{k+1}(\bar\bot)=\sSup_{k\in\N}g^k(\bar\bot)
	\end{displaymath}
        The left-hand side of the third equation, \emph{Diagonal}, can be re-written into a system of simultaneous equations by \emph{Beki\v c's rule}, and these precisely describe the right-hand side.
\end{proof}
Note that in the case of $\Vcat$ the computation of the supremum takes place in the finite lattice $\V^n$. This means that the terms $\Phi(k,f,\bar a)$ can only take finitely many different values, or in other words, that the supremum is obtained after finitely many iterations. This agrees with intuition: As the circuit is powered up, the potential in all wires stabilises after a (very short) finite delay; there is no possibility of infinite oscillation. In the case of $\Scat$ the same holds \emph{for every component} of the output, although the (mathematical) iteration will typically take infinitely many steps to produce all components of the output stream.

A Cartesian category admits an iterator (satisfying the three equations listed above) if and only if it is traced. This result appears in \cite[Section~7.1]{hasegawa2012models} where it is credited to Martin Hyland. The following is now an immediate consequence of this and the fact that $\mathbf{CCirc}_\delta^*$ is the \emph{free} symmetric traced monoidal category over
$\mathbf{Circ}_\delta$:

\begin{myprop}
	There is a unique traced monoidal functor $\semS\cdot$ from $\mathbf{CCirc}_\delta^*$ to $\Scat$ mapping the object $1$ of the former to $\S$ of the latter.
\end{myprop}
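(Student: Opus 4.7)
The plan is first to equip $\Scat$ with the structure of a symmetric traced monoidal category, and then to appeal to the universal property defining $\mathbf{CCirc}_\delta^*$.

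For the first step, I would invoke the Hyland--Hasegawa correspondence cited just before the statement: in a Cartesian category, giving a symmetric monoidal trace is equivalent to giving an iterator satisfying \emph{Naturality}, \emph{Iteration}, and \emph{Diagonal}. The previous proposition verifies exactly these three axioms for $\fix$ on $\Scat$, and an earlier proposition shows $\Scat$ is Cartesian, so the correspondence installs a canonical trace on $\Scat$, tied to $\fix$ via the same formula $\iter^n(f)=\tracei{n}{f\cdot(\Delta_n\otimes n)}$ that was used earlier to derive iteration from trace in $\mathbf{OCirc}_\delta^*$.

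For the second step, I would observe that the preceding development already supplies a symmetric monoidal functor $\semS\cdot:\mathbf{ECCirc}_\delta\to\Scat$ sending $1\mapsto\S$, every gate and structural morphism to its componentwise Scott-continuous interpretation, and $\delta$ to the shift; soundness of \emph{Timelessness}, \emph{Streaming}, \emph{Disconnect}, and \emph{Unobservable delay} has already been checked. Since $\mathbf{CCirc}_\delta^*$ is by definition the result of freely adjoining a trace operator to $\mathbf{ECCirc}_\delta$, every symmetric monoidal functor from $\mathbf{ECCirc}_\delta$ into an STMC extends uniquely to a traced monoidal functor. Applying this universal property to $\semS\cdot$ and the traced structure furnished by the first step yields the required $\semS\cdot:\mathbf{CCirc}_\delta^*\to\Scat$, with the object assignment $1\mapsto\S$ and the uniqueness clause both inherited from freeness.

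The point that needs most care is the first: one must check that the trace installed on $\Scat$ via the Hyland--Hasegawa correspondence is genuinely the trace that the free construction expects, i.e.\ that the unique extension sends the formal trace of $\mathbf{CCirc}_\delta^*$ precisely to $\fix$ (modulo pre-composition with a diagonal, as in the displayed formula). This is, however, exactly the content of the cited correspondence, so no extra work is required; everything else is routine, existence being by extension and uniqueness by the initiality of $\mathbf{CCirc}_\delta^*$ among STMCs under $\mathbf{ECCirc}_\delta$.
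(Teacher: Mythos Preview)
Your proposal is correct and follows the same route as the paper: the paper's entire argument is the single sentence preceding the proposition, which invokes the Hyland--Hasegawa equivalence (Cartesian $+$ iterator $\Leftrightarrow$ traced) to make $\Scat$ an STMC, and then the freeness of $\mathbf{CCirc}_\delta^*$ to obtain existence and uniqueness of the extension. You have simply unpacked that sentence into its two constituent steps and made explicit that the required base functor into $\Scat$ is already in hand from the earlier verification of the delay axioms.
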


In the presence of delay and feedback, even an input waveform of finite length can produce an infinite output stream. However, such an output stream must be eventually periodic. This is because the output at any point in time is completely determined by the contents of the delay components and the inputs at that time. A circuit can only contain finitely many delay components (say $n$~many) and so the ``internal memory'' can only assume $k=\left|\V\right|^n$-many different states. Any input stream that is longer than~$k$ must lead to at least one internal state re-occurring. In other words, if we provide such a circuit with all possible input streams of length~$k$ (there are $\left|\V\right|^k$-many) then we are guaranteed that the circuit has assumed all possible internal states it can ever assume. This means that if we test for the output from  one more input after all these initial waveforms then we have tested the circuit under all possible internal configurations. We have shown:
\begin{myprop}\label{prop:exp-exp}
	Two circuits with feedback $f$ and~$g$, which both contain no more than $n$ delay operators, are observationally equivalent if and only if they produce the same outputs for all waveforms of length up to~$\left|\V\right|^n+1$.
\end{myprop}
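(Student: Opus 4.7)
The plan is to establish the two directions separately. The ``only if'' direction is immediate: observational equivalence demands identical outputs on \emph{every} waveform, so in particular on waveforms of length at most $|\V|^n+1$.

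For the ``if'' direction, I would work in the semantic model $\Scat$ of Section~\ref{app:snd}. A circuit with at most $n$ delay operators behaves as a deterministic Moore machine, whose internal state is the $n$-tuple of values currently held in the delay components. This state lives in $\V^n$, so the reachable state space has size at most $N := |\V|^n$; the initial state is $\bar{\bot}$ by the \emph{Disconnect} axiom (all delays start at $\bot$), and one step of input produces both a new state and an output value via the combinational part of the circuit. Denote the resulting Moore machines by $M_f$ and $M_g$.

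The key combinatorial step is the pigeonhole reachability argument already sketched in the paragraph immediately preceding the proposition. Any reachable state of $M_f$ can be reached by some input of length at most $N$: a longer witness would traverse a repeated state, and the intervening loop can be excised to yield a shorter witness. Therefore, testing all waveforms of length up to $N$ drives each machine through every reachable internal configuration, and adding one further input step exercises the next-output map on every pair of reachable state and single extension input. Matching on all tests of length up to $N+1$ thus forces $M_f$ and $M_g$ to agree on every reachable configuration and every immediate extension.

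Finally, I would lift this local agreement to full observational equivalence by induction on the length $L$ of the test waveform. For $L \le N+1$ the hypothesis applies directly. For $L > N+1$, any waveform allegedly distinguishing $f$ from $g$ gives rise to state trajectories in both machines that must revisit a state; using the agreement of one-step transitions and outputs pinned down in the previous paragraph, one excises a loop to produce a strictly shorter distinguishing waveform, contradicting the inductive hypothesis. I expect the main obstacle to be making this last reduction precise: the excision must respect the trajectories of $M_f$ and $M_g$ simultaneously, and showing that the information recovered at length $N+1$ suffices for this requires carefully exploiting the fact that $M_f$ and $M_g$ are already known to agree as one-step maps on reachable states, rather than merely comparing them step by step, since the latter would naïvely yield only a product-space bound.
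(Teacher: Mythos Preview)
Your approach is the paper's own: the justification offered in the text is exactly the pigeonhole paragraph you invoke, and it stops at the same point where you flag the difficulty---after observing that inputs of length at most $N=|\V|^n$ visit every reachable state of each machine, and that one further step exercises every (reachable state, input) $\to$ output pair. You are more careful than the paper in isolating the residual obstacle, and you are right that it is the crux.

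The gap you name is genuine, and your sketch does not close it. What length-$(N{+}1)$ agreement buys you is that the relation $R=\{(s,t):\exists\,u,\ |u|\le N,\ \mathrm{state}_f(u)=s,\ \mathrm{state}_g(u)=t\}$ respects one-step outputs. To conclude equivalence you would need $R$ to be a bisimulation, i.e.\ closed under transitions; but when $|u|=N$ the successor pair $(\mathrm{state}_f(ua),\mathrm{state}_g(ua))$ need not lie in $R$, since a shorter witness $u'$ reaching $\mathrm{state}_f(ua)$ in $M_f$ may drive $M_g$ somewhere other than $\mathrm{state}_g(ua)$. Your loop-excision step runs into exactly this: excising a cycle of $M_f$'s trajectory perturbs $M_g$'s trajectory in an uncontrolled way. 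The classical bound for distinguishing two Mealy machines with at most $N$ states each, via partition refinement on their disjoint union, is $2N-1$; neither your argument nor the paper's informal paragraph invokes any circuit-specific structure (monotonicity, the lattice order on $\V$, the shared initial state $\bar\bot$) that would sharpen this to $N+1$. As written, then, the argument is incomplete: either the bound should be relaxed to something like $2|\V|^n-1$, or an additional idea exploiting the circuit structure must be supplied.
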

Propositions \ref{prop:exp} and~\ref{prop:exp-exp} give an exponential and a superexponential upper bound, respectively, for the number of tests required for checking the equivalence of circuits. They provide the background against which we present the \emph{much more efficient} diagrammatic operational semantics below.

The semantics also suggests other equations, not considered in this paper, for example, the following always holds
\begin{displaymath}
  \sSup_{k\in\N}g^k(\bar\bot)=\sSup_{k\in\N}g^{k+1}(\bar\bot)=\sSup_{k\in\N}g^k(g(\bar\bot))\;.
\end{displaymath}
On diagrams, this equations does not seem to lead to productive re-writes, but it does point to additional reasoning principles.

\section{Diagrammatic operational semantics}\label{sec:diags}

The results of the previous section establish a powerful framework for algebraic reasoning about circuits. However, this framework is not equally useful for \textit{automatic} reasoning and cannot implement a reasonable operational semantics. 

The first obstacle is the functoriality property of the tensor, which lacks directionality. Consider the circuit corresponding to the boolean expression $t\wedge f\wedge t$, where the constants involved satisfy the obvious equations. 
This diagram can be specified in several ways. Some of the specifications, e.g. $(((t\otimes f)\cdot{\wedge})\otimes t)\cdot {\wedge}$ have the immediately identifiable redex $(t\otimes f)\cdot{\wedge}=f$ which reduces the overall expression to $(f\otimes f)\cdot {\wedge}$, which reduces to $f$. However, the same circuit can be equivalently written as $(t\otimes f\otimes t)\cdot ({\wedge}\otimes id)\cdot{\wedge}$ which has no obvious redex.
\begin{center}
	\includegraphics[scale=1]{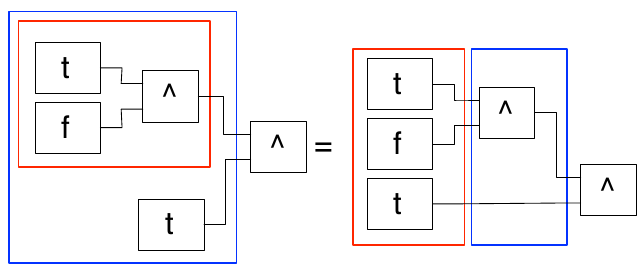}
\end{center}
Finding redexes in such structural diagram specifications is computationally prohibitive and an unsuitable operational semantics. The alternative is to exploit the connection between monoidal categories in general, and traced monoidal categories in particular, and certain graphs. This idea has been analysed in depth recently~\cite{DBLP:conf/lics/BonchiGKSZ16}.

We will give a concrete presentation of the graphs following Kissinger's \textit{framed point graphs}, which are a free (strict) symmetric traced monoidal category~\cite[Thm.~5.5.10]{Kissinger}. To make the presentation more accessible we will elide some of the categorical technicalities in \textit{loc. cit.} and give a more direct presentation.

Let a labelled directed acyclic graph (LDAG) be a DAG $(V,E)$ equipped with a partial labelling function $f:V\rightharpoonup L$. Let a labelled interfaced DAG (\textit{LIDAG}) be a labelled DAG with two distinguished lists of unlabelled nodes representing the ``input" and ``output" interfaces. \textit{LIDAGs} can be visualised as
\begin{center}
\includegraphics[scale=1]{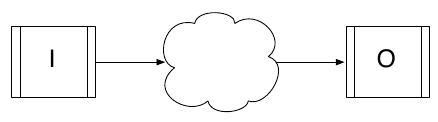}
\end{center}
Unlabelled nodes are called \textit{wire nodes} and edges connecting them are called \textit{wires}. A \textit{wire homeomorphism}~\cite[Sec.~5.2.1]{Kissinger} is any insertion or removal of wire nodes along wires, which does not change the shape of the graph. Two \textit{LIDAG}s are considered to be equivalent if they are graph isomorphic up to renaming vertices and wire homeomorphisms. The quotienting of \textit{LIDAG}s by this equivalence gives us \textit{framed point graphs} (FPG)~\cite[Def.~5.3.1]{Kissinger}. The algebraic specifications of the diagrams associated with the expression $t\wedge f\wedge t$ mentioned above all correspond to the (same) framed point graph with empty input interface and 1-point output interface.
\begin{center}
\includegraphics[scale=1]{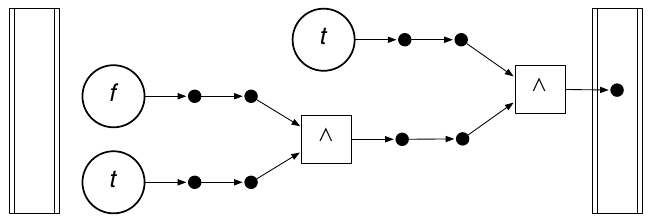}
\end{center}
This representation solves the problem raised by the functoriality of the tensor. Moreover, the graph representation can be made canonical by eliminating all redundant wire nodes efficiently (in linear time in terms of the size of the graph). In the PROP with FPGs as morphisms, composition, tensor and trace can be represented visually as below. 

Sequential composition of two FPGs where the size of the output of the first matches the size of the input of the second is defined by identifying the output list and the input list of the two graphs. Since FPGs are equal up to renaming of vertices, the names of the wires can be chosen so that the composition is well defined. The unlabelled input and output nodes become wire nodes in the composition.
\begin{center}
	\includegraphics[scale=1]{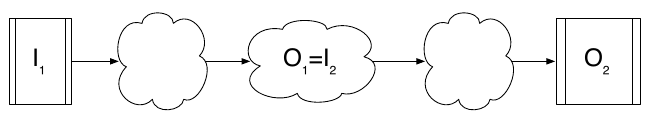}
\end{center}
The tensor is the disjoint union of the two graphs. It is always well defined since graphs are identified up to vertex renaming.
\begin{center}
	\includegraphics[scale=1]{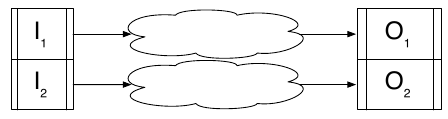}
\end{center}
The trace operator picks the head nodes of the input and output lists of points, makes them wire nodes, and connects them. 
\begin{center}
	\includegraphics[scale=1]{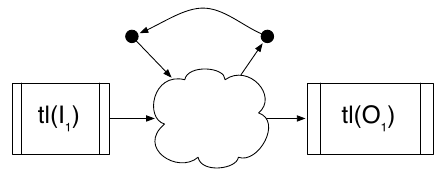}
\end{center}
The graph representation provides a solution for dealing with the functoriality of the tensor, but the presence of feedback raises a new, additional problem. Suppose that we deal with a graph which includes several iterations, e.g. $\iter (f)\cdot \iter (g)$.
\begin{center}
	\includegraphics[scale=1]{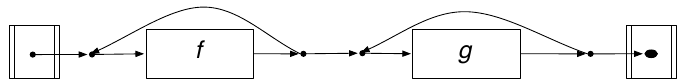}
\end{center}
This graph raises two computationally difficult questions. The first one is how we identify feedback patterns efficiently so that we can apply the iteration axiom. The second one is, if there are several instances of the iteration unfolding axiom that can be applied, what is the schedule of applying them? Without a good (linear time) solution to the first problem we cannot claim that we have a genuine operational semantics. Without a good solution to the second problem we run into technical problems of confluence. Diagrammatic representation alone is no longer the solution. 

The main contribution of this section is showing how to solve these two problems.
\begin{mylem}[Global trace form]\label{lem:gbltrc}
	For any morphism $f$ in a free STMC there exists a trace-free morphism $\hat f$ such that $\text{Tr}^n(\hat f)=f$ for some $n\in\mathbb N$.
\end{mylem}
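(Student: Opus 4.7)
The plan is to proceed by induction on the construction of $f$ as a morphism in the free STMC, pulling all traces toward the outside using the standard STMC axioms (vanishing, superposing, and sliding/tightening). The induction hypothesis is exactly the statement of the lemma: every subterm can be written as $\tracei{n_i}{\hat f_i}$ for some trace-free $\hat f_i$ and $n_i\in\N$.

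For the base cases, the generators of the free STMC (signature symbols, identities, and symmetries $\g x_{m,n}$) are already trace-free, so we take $n=0$ and $\hat f = f$. For the inductive step there are three cases, corresponding to the three combinators. The trace case is immediate from the vanishing axiom: if $f = \tracei{k}{g}$ and by induction $g = \tracei{n_g}{\hat g}$, then $f = \tracei{k+n_g}{\hat g}$ after a harmless symmetry to reorder the traced wires. The composition case uses sliding/tightening: given $f_1 = \tracei{n_1}{\hat f_1}$ with $\hat f_1 : A \otimes n_1 \to B \otimes n_1$ and $f_2 = \tracei{n_2}{\hat f_2}$ with $\hat f_2 : B \otimes n_2 \to C \otimes n_2$, we first superpose identities on the unused trace wires to obtain morphisms that live over $A\otimes (n_1+n_2) \to B\otimes (n_1+n_2)$ and $B\otimes (n_1+n_2) \to C\otimes (n_1+n_2)$ respectively, then compose them, then apply a single $\text{Tr}^{n_1+n_2}$ on the outside; naturality guarantees this equals $f_1\cdot f_2$. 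The tensor case is similar: $\tracei{n_1}{\hat f_1}\otimes \tracei{n_2}{\hat f_2}$ is rewritten via superposing as $\tracei{n_1+n_2}{(\hat f_1 \otimes \hat f_2)}$ post-composed and pre-composed with the symmetries that interleave the ``data'' wires and ``trace'' wires into a single $(m_1+m_2) + (n_1+n_2)$ block.

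Diagrammatically the argument is obvious — each trace loop can be dragged around the outside of the diagram without crossing any box — and this geometric picture is exactly what the STMC axioms encode. The main technical annoyance is bookkeeping of wire orderings: in the free STMC trace is indexed by a single block of $n$ wires, so at each inductive step we must insert the appropriate symmetry $\g x$ to assemble the loose traced wires of the subterms into a single contiguous block. This is where a careful statement of the sliding and superposing equations in their typed form is needed, but no new idea beyond the standard STMC coherence.

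The main obstacle — in the sense of being the only step that is not purely syntactic rearrangement — is verifying that the symmetries used to interleave trace wires are handled coherently, so that the result is well-defined up to the equality of morphisms in the free STMC. This is guaranteed by the coherence theorem for STMCs (equivalently, by the graph-isomorphism presentation of STMCs alluded to earlier in the paper via Kissinger's framed point graphs), which ensures that any two ways of shuffling wires to present $\hat f$ give equal morphisms once the outside trace is reapplied. Given this, the induction goes through cleanly and yields the desired $\hat f$ and $n$.
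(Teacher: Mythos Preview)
Your proof is correct and rests on the same idea as the paper's --- use the STMC axioms to push every trace operator to the outermost position --- but the decomposition is different. The paper gives a single diagrammatic rewrite rule showing that an \emph{inner} trace $\tracei n{f'}$ occurring anywhere inside an already-traced context $\tracei m{g\cdot(p_1\otimes(-)\otimes p_3)\cdot h}$ can be absorbed into the outer trace, enlarging it to $\tracei{m+n}{\cdots}$; finiteness of diagrams then gives termination. Your argument instead performs structural induction on the term syntax, treating the three combinators $(\cdot)$, $(\otimes)$, and $\text{Tr}$ as separate cases and invoking tightening, superposing, and vanishing respectively. What you gain is a fully syntactic proof that does not appeal to the diagrammatic coherence theorem except, as you note, to justify the wire-reordering symmetries; what the paper gains is brevity, since its single context equation subsumes your composition and tensor cases at once. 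Either way the ``bookkeeping of wire orderings'' you flag is the only real work, and you have identified it correctly.
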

\begin{proof}
	The proof is by diagrammatic reasoning\footnote{This can be done algebraically, of course, but it is much less clear.}. The same diagram can be created either by applying a trace locally or globally for any $g:m+m'\rightarrow p_1+p_2+p_3$, $f':n+p_2\rightarrow n+p_2'$, $h:p_1+p_2'+p_3\rightarrow m+n'$:
	\begin{multline*} 
	\text{Tr}^m(g;(p_1\otimes\text{Tr}^n(f')\otimes p_3);h) =
	\text{Tr}^{m+n}((n\otimes g);(\mathsf x_{n,p_1}\otimes p_2\otimes p_3);(p_1\otimes f'\otimes p_3);(\mathsf x_{p_1,n}\otimes p_2'\otimes p_3);(n\otimes h)).
	\end{multline*}
	\begin{center}
		\includegraphics[scale=1]{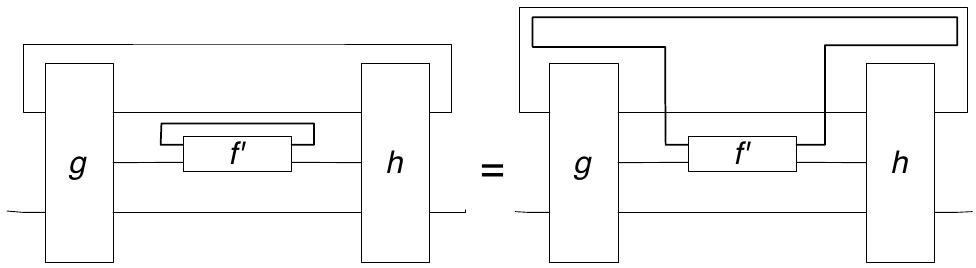}
	\end{center}	
	Since diagrams are finite, this rewrite can be repeated to write the diagram as a global trace applied to a  trace-free morphism~$\hat f$.  
\end{proof}
So a diagram with feedback loops can always be rewritten as single, global, feedback loop. In the graph we can maintain a distinguished subset of known \textit{feedback} wire nodes so that the feedback loops can be immediately identified. This can be done compositionally just by keeping track of the feedback wire nodes in sequential composition, tensor and trace. 

The most interesting case is of the trace:
\begin{center}
	\includegraphics[scale=1]{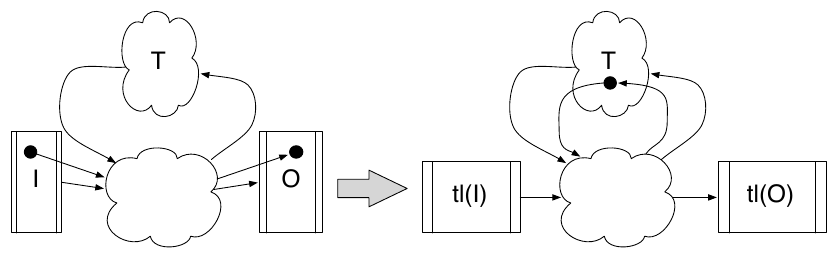}
\end{center}
The wire homeomorphism rules allow us to place exactly one wire node in the set of feedback wire nodes. 
By maintaining the feedback wire nodes explicitly we can ensure two useful invariants. First, the rest of the graph is a DAG. Second, for each feedback wire node there is precisely one incoming and one outgoing edge. We call these graphs \textit{trace-framed point graphs} (TFPGs). Note that feedback wire nodes must not be entirely removed as wire homeomorphisms are applied. Feedback edges that bypass the set of feedback wire nodes are legal, but break the TFPG form. Maintaining these restrictions is computationally trivial (constant overhead).

%\subsection{The rewrite system}

We are now in a position to define the diagrammatic operational semantics as a graph-rewriting system in which each rule can be applied efficiently, in linear time as a function of the size of the graph. 

Given a categorical signature $\mathbf{Circ}$ we use the levels and the gates as labels, along with labels denoting discrete delays and stubs as well as a set of labels $\g i_k, k\in\mathbb N$ denoting distinguished input ports for gates. %Note that the fork and join combinators can be represented directly as wires. 
For readability we usually omit the labels and for join, fork, and its input ports.
\begin{center}
	\includegraphics[scale=1]{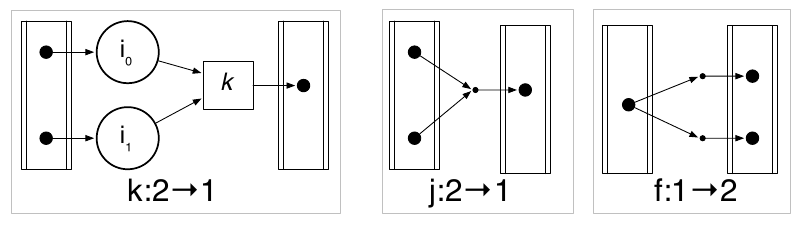}
\end{center}
A circuit specified as a morphism is a TFPG. We add the following rewrite rules corresponding to the categorical axioms. Note that each rule can be applied in linear time, including the identification of the redex. 

\subsection{Combinational rules}
The diagrammatic semantics will be given as a collection of graph-rewriting rules. We give the rules in an informal diagrammatic style, but a formalisation in an established formalism such as DPO~\cite{corradini1997algebraic} is a standard exercise.

\textit{Constant.} The basic rule is the reduction of a gate with known input levels. If $\mathbf{v}\cdot k = v'$ then 
\begin{center}
	\includegraphics[scale=1]{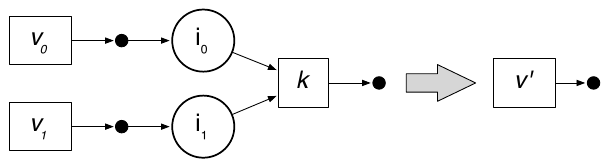}
\end{center}

\textit{Enhanced Constant Rules.} Besides the basic equations for constants, more equations can be proved by extensionality in which reductions can be carried out without all input values being present. For example, $\mathit{true}\vee x=\mathit{true}$ or $\mathit{true}\wedge x= x$. These equations are admissible in the rewrite system.
%without ruining its properties. On the contrary, rewrites that could get stuck using just the \textit{Constant} rules can make progress if the enhanced rules are added. For example, consider a Boolean multiplexer (MUX) gate $\g m : 3 \rightarrow 1$ defined by the equation $(v\otimes v'\otimes v'')\cdot \g m=v'''$ when $v''' = \text{if $v''$ then $v$ else $v'$}$.  Given that $(f\otimes g\otimes \g t)\cdot \g m = f$ and $(f\otimes g\otimes \fork)\cdot \g m=g$ we can add the following enhanced rewrite rules to a system using MUXes:
%\begin{center}
%	\includegraphics[scale=0.75]{}
%\end{center}

%\textit{Join.} Joining two wires behaves like a gate computing the \textit{join} of the values, but it is expressed diagrammatically using just wires. If $ v_0 \sqcup v_1 = v$
%\begin{center}
%	\includegraphics[scale=0.75]{}
%\end{center}
\textit{Fork.} In contrast, the forking of a wire means copying the value attached to it. Forking has $\g w$ as a co-unit.  
\begin{center}
	\includegraphics[scale=1]{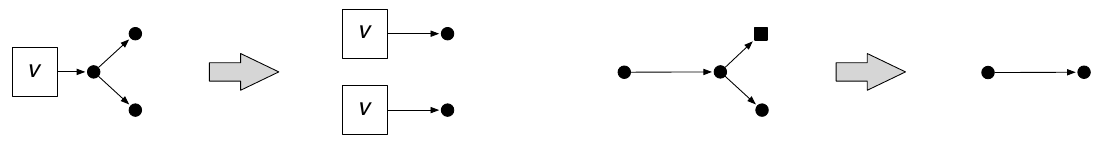} 
\end{center}

The rules below are for \textit{waveforms}, where $v::s$ is the circuit diagram 
%\begin{center}
	\includegraphics[scale=1]{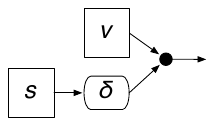}.
%\end{center}
Note that waveforms  are simple sub-graphs that can be identified in the overall circuit diagram in linear time. 

\textit{Streaming.} For any values $v,v'$, waveforms $s,s'$ and constant $k$
\begin{center}
	\includegraphics[scale=1]{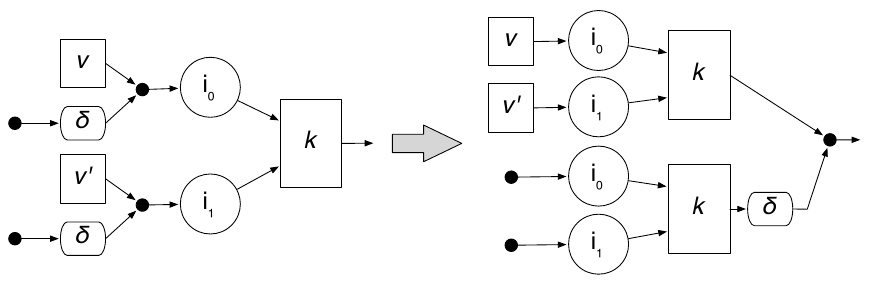}
\end{center}

%\textit{Waveform Join.} For any values $v,v'$ and waveforms $s,s'$ 
%\begin{center}
%	\includegraphics[scale=0.75]{}
%\end{center}

\textit{Stub.} We omit the label and the input port of the stub $\mathsf w$, for readability. For any constant $k:m\rightarrow 1$,
\begin{center}
	\includegraphics[scale=1]{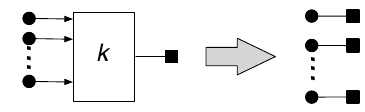}
\end{center}
We call the rewrite rules above the \textit{local rewrite rules}. A TFPG where no local rewrite rules apply is in \textit{canonical form.} 
\begin{myprop}
	The local rewrite rules are sound. 
\end{myprop}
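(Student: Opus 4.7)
The plan is a case-by-case verification, checking that each rewrite rule $L\Rightarrow R$ denotes equal morphisms in $\mathbf{OCirc}_\delta^*$. Since the interpretation of a TFPG as a morphism is compositional (built from sequential composition, tensor and trace, up to the symmetric traced monoidal axioms), and since extensional equality is a congruence with respect to $\circ$, $\otimes$ and $\mathrm{Tr}$, it is enough to establish that whenever the left-hand sub-graph of a rule is matched, it and the right-hand sub-graph denote equal morphisms. Congruence then lifts the local replacement to the enclosing TFPG.

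I would dispatch each rule by naming the axiom it instantiates. The \textit{Constant} rule is exactly the Gate equation of Def.~\ref{def:ccirc}. The \textit{Enhanced constant} rules such as $\mathit{true}\vee x=\mathit{true}$ hold as identities in $\mathbf{ECCirc}$, because the quotient by extensional equivalence identifies any two combinational circuits with the same input-output behaviour. The \textit{Fork} rules reduce to the Fork axiom $\fork\cdot v=v\otimes v$ of Def.~\ref{def:ccirc} together with the co-unit law for $(\fork,\g w)$ that is part of the Cartesian comonoid structure on $\mathbf{ECCirc}$ (Theorem~\ref{thm:ocirccart}). The \textit{Streaming} rule for waveforms is the Streaming axiom of Def.~\ref{def:ccircd}, unfolded through the inductive definition of $s_{n+1}=(s_n\cdot\delta\otimes v_{n+1})\cdot\join$; its multi-input generalisation is immediate from Lemma~\ref{lem:genstream}. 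The \textit{Stub} rule is the Cartesian coherence $f\cdot\g w^m=\g w^m$ applied to the combinational $k$.

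The step I would treat with most care is not any individual rule but the matching of a redex inside a TFPG: one needs to argue that a matched occurrence of the LHS graph can be exhibited as $c_1\cdot(p\otimes g\otimes q)\cdot c_2$ (possibly inside the global outer trace of Lemma~\ref{lem:gbltrc}) so that the axiomatic equality $g=g'$ can be substituted in. The Staging Lemma (Lemma~\ref{lem:pipe}) provides the factorisation of any stage of a diagram as a tensor of identities with a single non-identity box, giving the required context-plus-box-plus-context decomposition; the global trace form ensures no hidden feedback sits between the redex and the outer interface. The remaining obstacle is purely bookkeeping, and once it is settled each rule becomes an instance of an axiom that has already been verified to hold in $\mathbf{OCirc}_\delta^*$, so soundness follows.
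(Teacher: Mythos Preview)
Your proposal is correct and follows the same route as the paper: each local rule is an instance of an axiom (or derived equation) that already holds in the extensional circuit category, and congruence of $\circ$, $\otimes$, $\mathrm{Tr}$ lifts the local equality to the whole TFPG. The paper is in fact far terser than you are---it dispatches the whole Proposition with ``The proof is immediate''---so your case analysis and the paragraph on redex-in-context factorisation supply detail the paper omits rather than a different argument.
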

If the graph on the left is a representation of a morphism $f$ then the graph on the right is a representation of a morphism $g$ and $f=g$ in $\mathbf{ECCirc}$. The proof is immediate. 
\begin{mylem}\label{lem:term}
	The local rewrite rules are terminating.
\end{mylem}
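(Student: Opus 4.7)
The plan is to exhibit a well-founded measure on TFPGs that strictly decreases under every local rewrite rule; termination then follows by well-founded induction. I propose the lexicographic triple
\[
\mu(G) \;=\; (M(G),\ N_g(G),\ N_{fs}(G)),
\]
where $M(G)$ is the finite multiset of waveform lengths $\ell \geq 2$ that back input wires of gate nodes of $G$, compared under the Dershowitz--Manna multiset extension on $\mathbb{N}$; $N_g(G)$ is the number of gate nodes; and $N_{fs}(G)$ is the number of fork and stub nodes. The crucial design choice is that $M$ is blind to entries of length~$1$ --- i.e.\ to simple value-backed inputs --- so that the various rules that merely shuffle or eliminate values and structural nodes cannot perturb it.

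The pivotal case is Streaming, which is the only rule that creates a new gate and hence increases $N_g$. Applied to a gate whose $i$-th input is backed by a waveform $v_i \mathord{::} s_i$ of length $\ell_i \geq 2$, it replaces the entry $\ell_i$ in $M$ either by $\ell_i - 1$ (if $\ell_i \geq 3$, at the new tail-copy of the gate) or removes it outright (if $\ell_i = 2$); the corresponding input at the new head-copy is the value $v_i$ of length~$1$, which is invisible to $M$. This is a strict Dershowitz--Manna decrease and it dominates the $+1$ in $N_g$. All other rules leave $M$ unchanged and strictly decrease one of the remaining counters: Constant and Enhanced Constant remove a gate whose relevant inputs and output are values (hence outside $M$), decreasing $N_g$; Stub absorption removes a gate and thereby drops any waveform-backed inputs from $M$, decreasing $N_g$; and both Fork rules touch no gate node, leave $M$ untouched, and strictly decrease $N_{fs}$.

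The main obstacle is checking that no rule inadvertently synthesises a new waveform of length $\geq 2$ at a gate input and thereby grows $M$. This never happens: waveforms are only ever decomposed (by Streaming) or dismantled together with their consuming gate (by Constant, Enhanced Constant, or Stub absorption), and the Fork rules only create fresh length-$1$ backings, which stay below the threshold. Once this invariant is verified case by case, $\mu$ strictly decreases on every rule, and well-foundedness of the Dershowitz--Manna order on $\mathbb{N}^{<\omega}$ together with that of $\mathbb{N}^2$ completes the proof.
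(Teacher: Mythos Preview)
Your termination measure has a genuine gap: the invariant asserted in the last paragraph---that no local rule ``synthesises a new waveform of length $\geq 2$ at a gate input''---fails for the Constant rule (and, by the same mechanism, for Fork). The problem is not at the inputs or output of the gate being eliminated, but \emph{downstream}: reducing a gate to a value can complete a partially-built $(-\cdot\delta\otimes -)\cdot\join$ pattern that then backs some other gate. Concretely, take unary gates $k, k'$ and start from $(v_1{::}v_0)\cdot k\cdot k'$, so $\mu=(\{2\},2,0)$. One Streaming step on $k$ yields
\[
\bigl((v_0\cdot k_{\mathrm{tail}})\cdot\delta \;\otimes\; (v_1\cdot k_{\mathrm{head}})\bigr)\cdot\join\cdot k'
\]
with $\mu=(\emptyset,3,0)$: neither $k$-copy has a waveform input, and the join feeding $k'$ is not a waveform since its head is a gate output rather than a value. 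Reducing $k_{\mathrm{head}}$ by Constant gives $\mu=(\emptyset,2,0)$. Now reduce $k_{\mathrm{tail}}$ by Constant: the structure feeding $k'$ becomes $(w_0\cdot\delta\otimes w_1)\cdot\join=w_1{::}w_0$, a bona fide length-$2$ waveform, and $\mu$ jumps to $(\{2\},1,0)$---a strict \emph{increase} in your lexicographic order. The analogous failure for Fork: if $v\cdot\fork$ feeds the two legs of a $(-\cdot\delta\otimes -)\cdot\join$ node, that node is not a waveform before the rule but becomes $v{::}v$ afterwards.

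The paper takes a different and much lighter route: a single global quantity, the sum of path lengths to the input, together with the convention that each Streaming step is bundled with the immediate Constant reduction of the freshly created $(\mathbf v\cdot k)$ redex. Note that this bundling alone would not repair your measure (the tail-side Constant above still increases $M$); what is really needed is a primary component that is insensitive to whether the head and tail of a $(-\cdot\delta\otimes -)\cdot\join$ pattern are already values---for instance a depth-based quantity rather than a count of completed waveforms.
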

\begin{proof}
	TFPGs are finite. All rules reduce the sum of path lengths to the input. In the case of input we assume that the new redex corresponding to the term $(v\otimes v')\cdot k$ is reduced as well. 
\end{proof}
\begin{mylem}\label{lem:snd}
	The local rewrite rules are strongly confluent.
\end{mylem}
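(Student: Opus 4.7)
The plan is to prove strong confluence by a critical-pair analysis on the local rewrite rules. Given a TFPG $G$ with two one-step reductions $G\to G_1$ applying rule $R_1$ at redex $\rho_1$ and $G\to G_2$ applying $R_2$ at $\rho_2$, I would split into cases according to whether $\rho_1$ and $\rho_2$ share any labelled vertices. If the two redexes are vertex-disjoint, they act on disjoint subgraphs of $G$, so the rewrites commute and the diamond closes in a single step on each side. Because the left-hand sides of all rules (Constant, Enhanced Constant, Fork, Streaming, Stub) have bounded context — essentially one distinguished labelled vertex plus its immediate neighbourhood — two redexes can only overlap when they share a labelled vertex, reducing the remaining analysis to a finite number of critical-pair families.

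I would then enumerate the overlapping families: two Constant (or Enhanced Constant) rules at the same gate; a Constant rule at a gate together with a Fork rule on a shared input value node; a Constant rule at a gate whose output is a Stub; a Streaming rule and a Constant rule at the same gate with at least one waveform input; and Fork/Stub overlaps on a shared value node. In each family, I would apply both rewrites in either order and verify that either the two results coincide as TFPGs, or can be joined in one further step. For Constant–Constant overlaps, the extensional completeness of the underlying gate guarantees that the resulting value agrees; for Constant–Fork, fork commutes with values so the two orders produce the same duplicated value; for Constant–Stub, the stub rule erases the gate while the constant rule evaluates it and then the value is absorbed by a stub, both paths producing the same erased context; for Streaming–Constant, the head copy of the gate produced by streaming is immediately reducible by the constant rule, matching the reduct of the other branch.

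The main obstacle I anticipate is the Enhanced Constant family, because its left-hand sides are not uniquely determined by a gate's inputs: a gate like $t\wedge x\wedge y$ with $x,y$ already values admits a standard Constant reduction, an enhanced reduction absorbing only the $t$, and possibly several further enhanced reductions with overlapping patterns. To handle this I would invoke the extensional completeness of circuits together with the fact that after any such reduction the residual is still a (smaller or specialised) gate to which the remaining enhanced rule still applies, yielding convergence in at most one additional step — exactly what strong confluence demands. A secondary subtlety is the Streaming rule, whose left-hand side spans a short waveform subgraph; I would verify that waveform patterns $v::s$ remain syntactically identifiable as waveforms after any other local rule fires in their vicinity, so that Streaming is still applicable to the residual redex if another rule reduces first.

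Putting the pieces together, every critical pair is joinable in at most one step on each side, and disjoint redexes commute in one step, establishing the diamond property and hence strong confluence of the local rewrite system.
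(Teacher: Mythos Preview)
Your proposal is sound in outline --- critical-pair analysis is exactly the right tool --- but you miss the paper's key simplifying observation and thereby work much harder than necessary. The paper's proof is two sentences: it notes that for every rule \emph{except Stub} the redex subgraphs are necessarily disjoint, so they commute trivially; the only genuine overlaps involve Stub, which obviously commutes with any other rule. The reason the non-Stub redexes are disjoint is structural: in the TFPG representation each value node (of type $0\to 1$) has exactly one outgoing edge, so a value cannot simultaneously be the input to two different redexes. Hence your families ``Constant--Constant'', ``Constant--Fork'', ``Streaming--Constant'' and so on are empty --- those overlaps simply do not arise.

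Two further points. First, your worry about the \emph{Enhanced Constant} family is misplaced: the paper introduces those equations as \emph{admissible} in the rewrite system, not as primitive rules, so they play no role in the confluence proof. Second, your plan would still succeed --- once you try to exhibit a concrete overlap in each non-Stub family you would discover there is none --- but the write-up would be much longer than needed and would obscure the real reason confluence is easy here. The useful takeaway is to look first at the arity/interface constraints of the redex patterns before launching into a case enumeration.
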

\begin{proof}	
	Given a TFPG so that more than one local rewrite rule is applicable we first note that for all rules other than \textit{Stub} the redex subgraphs must be disjoint, so they commute trivially. But the \textit{Stub} rule also commutes, obviously, with any other rule. It follows that the system is strongly confluent. 
\end{proof}
\begin{mylem}[Progress]\label{lem:prog}
	A circuit $f:0\rightarrow n, n\neq 0$ without traces or delays is either a {value} or the TPFG associated with it has redexes. 
\end{mylem}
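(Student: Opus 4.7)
The plan is to work directly on the TFPG of $f$. Since $f$ has no traces and no delays, its TFPG is a finite DAG whose labelled nodes split into \emph{passive} sources (values) and \emph{active} nodes (gates, joins, forks, and stubs), interspersed with wire nodes. Because the input interface is empty and no feedback wires occur, every wire, when traced back through wire homeomorphisms, ultimately originates at a value node.

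I would first dispatch the degenerate case where no active nodes occur. Then, after collapsing wire nodes, each of the $n\geq 1$ output ports is connected directly to a value, and the whole TFPG represents a tensor $v_1\otimes\cdots\otimes v_n$ of values --- which is the required ``value'' form.

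If active nodes are present, I would pick one that is \emph{minimal} among them in the DAG order, i.e.\ one that has no active ancestor. Such a node exists because the set of active nodes is finite and non-empty; by minimality, all of its incoming wires originate directly (via wire nodes) at values. A case analysis on the label of this minimal node then exhibits a redex in each case: a gate $k:m\to 1$ with all inputs valued fires the \textit{Constant} rule, since $\mathbf v\cdot k=v'$; a join with two valued inputs is the special $\sqcup$-instance of the same rule; a fork with a valued input fires the \textit{Fork} rule, duplicating the value; and a stub fed by a value fires the \textit{Stub} rule, specialised to the case of a constant $k:0\to 1$.

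The main obstacle I anticipate is confirming the stub case: one must check that the \textit{Stub} rewrite, stated ``for any constant $k:m\to 1$'', is meant to include the degenerate arity $m=0$, so that a stub attached directly to a value indeed reduces. If it is not, one supplements the rewrite system with the elementary rule $v\cdot\g w\rightsquigarrow \mathit{id}_0$, which is an immediate consequence of the \textit{Stub} axiom of Def.~\ref{def:ccirc}. Apart from this minor bookkeeping, the argument is essentially a structural induction on the DAG repackaged as a minimality argument on active nodes, and every case reduces directly to one of the rewrite rules listed above.
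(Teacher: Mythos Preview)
Your proposal is correct and follows essentially the same approach as the paper's proof: observe that a closed, trace- and delay-free circuit has only values as sources, and then either there are no active nodes (so the circuit is a value) or some active node has only values on its inputs, yielding a redex. Your minimality argument on active nodes makes explicit what the paper leaves implicit, and your case analysis (including the stub subtlety) is more thorough than the paper's own sketch, which simply states that if the TFPG contains a gate, join, or fork then ``the input nodes must connect directly to constants so there are redexes.''
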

\begin{proof}
	A trace-free circuits of type $0\rightarrow n$ must have form $\mathbf v \cdot g$ for some $g$. If the TPFG representation of $g$ consists just of identities then by wire homeomorphisms we have a representation of a value. If the TPFG contains a constant, or a node structure corresponding to join or fork, the input nodes must connect directly to constants so there are redexes. 
\end{proof}
From Lem.~\ref{lem:snd} and~\ref{lem:prog} it follows that
\begin{mythm}\label{thm:redux}
	Given a circuit $f:0\rightarrow m, m\neq 0$ in $\mathbf{ECCirc}$ the local rules will always rewrite its TPFG representation in a finite number of steps into a TPFG representation of a \textbf{value}~$\mathbf v$ such that $f=\mathbf v$. 
\end{mythm}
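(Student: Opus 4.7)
The plan is to assemble Theorem~\ref{thm:redux} from the three technical lemmas preceding it: termination, strong confluence, and progress. The setting is deliberately restricted to $\mathbf{ECCirc}$, so the TFPG of $f$ contains neither delay nodes nor feedback wire nodes; this is exactly the shape that the progress lemma (Lem.~\ref{lem:prog}) addresses.

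First I would observe that by Lemma~\ref{lem:term} every rewrite sequence starting from the TFPG of $f$ terminates in finitely many steps. Pick any such maximal sequence and call its endpoint $G$; by construction no local rewrite rule applies to $G$, i.e.\ $G$ is in canonical form. The circuit represented by $G$ is still of type $0\to m$ with $m\neq 0$, because none of the local rules changes the input/output interface.

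Next, I would invoke the progress lemma (Lem.~\ref{lem:prog}): since the underlying morphism lives in $\mathbf{ECCirc}$ (no traces, no delays) and the graph admits no redex, the second disjunct of the lemma is excluded, forcing $G$ to be (the TFPG of) a value $\mathbf v$. Soundness of each local rewrite step, established just before Lemma~\ref{lem:term}, gives $f=\mathbf v$ as morphisms of $\mathbf{ECCirc}$; thus the desired value exists.

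Finally, strong confluence (Lem.~\ref{lem:snd}) together with termination yields uniqueness of the normal form via Newman's lemma (trivially, since strong confluence already implies the diamond property on one-step reductions), so the value $\mathbf v$ does not depend on the chosen reduction strategy. The only delicate point, and the one I would be most careful about, is verifying that the progress lemma's hypothesis is indeed fulfilled at the endpoint: one must check that reduction cannot introduce delay or feedback nodes, which is immediate from inspection of the local rules, each of which only removes or relabels nodes of the existing combinational fragment. With that observation in hand, the theorem is just a one-line combination of the three lemmas.
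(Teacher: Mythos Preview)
Your proposal is correct and follows essentially the same route as the paper, which simply states that the theorem follows from Lem.~\ref{lem:snd} and Lem.~\ref{lem:prog} (with termination, Lem.~\ref{lem:term}, implicitly used). Your write-up is in fact more careful than the paper's one-line justification: you make explicit the appeal to termination for the existence of a normal form, to soundness for the equation $f=\mathbf v$, and you add a uniqueness clause via confluence that the paper does not spell out (and that the theorem, as stated, does not strictly require).
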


\subsection{Feedback and delay}
We now need to add rules to ensure to deal with delays which occur in arbitrary places in the circuit, not just in waveforms. For example, a circuit such as $(t\otimes f)\cdot (1\otimes \delta)\cdot \wedge$, in TFPG representation, does not have any redex because of how the delay is placed. Dealing with the delays requires a complex rule which takes into account the presence of the trace. The trace and the delay must be dealt with together because of the following result which allows us to write any circuit in what we will call \textit{global-delay form.}
Note Lem.~\ref{lem:genstream} does not hold for combinational circuits with values. However, the following holds:
\begin{mylem}\label{lem:pass}
	For any combinational circuit $f:m\rightarrow n$ there exists a passive circuit $\tilde f$ such that $f=(m\otimes\mathbf v)\cdot\tilde f$ for some~$\mathbf{v}$.
\end{mylem}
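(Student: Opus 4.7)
The plan is to relocate every occurrence of a value morphism $v\colon 0\to 1$ inside $f$ to the input boundary, leaving behind a passive circuit $\tilde f$ with extra inputs that carry those values. First I would apply the Staging Lemma (Lemma~\ref{lem:pipe}) to write $f=f_0\cdot f_1\cdots f_k$ with each $f_i=m_i\otimes g_i\otimes n_i$ a single-generator stage. Let $i_1<\cdots<i_\ell$ be the indices at which $g_{i_j}=v_j\colon 0\to 1$ is a value.

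The main step is to argue that each value-stage can be migrated to the very front of the pipeline. Because a value has no inputs, the interchange law and symmetry of the ambient SMC allow me to commute a value-stage past any earlier stage: given $f_j=m_j\otimes g_j\otimes n_j$ occurring before $(p\otimes v\otimes q)$, interchange yields
\[
  f_j\cdot(p\otimes v\otimes q)\;=\;(p'\otimes v\otimes q')\cdot f_j',
\]
where $f_j'$ is $f_j$ with an extra wire threaded through its identity portion via symmetries. Iterating this rewrite once per value, all $\ell$ values end up at stage~$0$, so that
\[
  f\;=\;(\mathrm{id}_m\otimes v_1\otimes\cdots\otimes v_\ell)\cdot\sigma\cdot\tilde f
\]
for some wire permutation $\sigma$ and a residual circuit $\tilde f\colon m+\ell\to n$ in which no value generator occurs. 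Absorbing $\sigma$ into $\tilde f$ and taking $\mathbf v=v_1\otimes\cdots\otimes v_\ell$ gives the desired factorisation $f=(m\otimes\mathbf v)\cdot\tilde f$.

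The main obstacle will be bookkeeping the symmetries introduced each time a value-stage is commuted past a gate, fork, join or stub, as each such rewrite enlarges the ambient identity and demands a small coherence check to certify that the new pipeline represents the same morphism in the free SMC. This is routine but verbose; importantly, no delay or trace generator is ever introduced, so $\tilde f$ remains within the combinational fragment, and since every value of $f$ has been relocated to the front, $\tilde f$ is passive by construction.
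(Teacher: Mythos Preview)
Your proposal is correct and follows essentially the same idea as the paper: pull every value occurrence out to the input boundary, leaving a passive residual. The paper does this in one line by diagrammatic reasoning---replacing each internal value by a wire connected to a fresh input---whereas you spell out the same manoeuvre algebraically via the Staging Lemma and interchange; the extra bookkeeping you flag is exactly what the diagrammatic argument absorbs implicitly.
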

\begin{proof}
	The proof is immediate by diagrammatic reasoning, by applying this transformation for each occurrence of a value $v$:
	\begin{center}
		\includegraphics[scale=1]{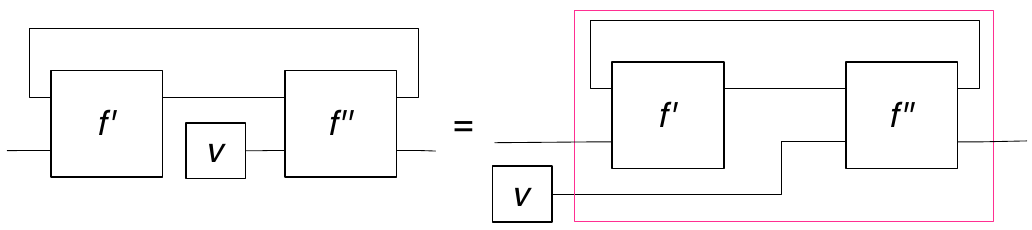}
	\end{center}
\end{proof}
We call the application of the transformation in this lemma the \textit{passification} of the circuit. 
\begin{mylem}\label{lem:gdf}
	Any circuit $f$ in $\mathbf{OCirc}_\delta^*$ can be written as $f=\tracei{m}{(\delta^n\otimes p)\cdot f'}$ for some trace-free, delay-free circuit $f'$, $m,n,p\in\mathbb{N}$. 
\end{mylem}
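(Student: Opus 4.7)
The plan is to combine Lemma~\ref{lem:gbltrc} with an inductive argument that slides each remaining delay out onto its own feedback wire.

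First, apply Lemma~\ref{lem:gbltrc} to write $f=\tracei{k}{\hat f}$ with $\hat f$ trace-free. Although $\hat f$ contains no inner traces, it may still contain occurrences of~$\delta$. I claim, by induction on the number of such occurrences, that $f$ can be rewritten in the form $\tracei{k+n}{(\delta^n\otimes k)\cdot f'}$ with $f'$ both trace-free and delay-free; this is the desired form with $m=k+n$ and $p=k$.

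The base case (zero delays) is immediate with $f'=\hat f$. For the inductive step, single out any one occurrence of $\delta$ inside $\hat f$. Applying the Staging Lemma (Lemma~\ref{lem:pipe}) and grouping the stages before and after the chosen delay, write
\begin{displaymath}
\hat f \;=\; u\cdot(a\otimes\delta\otimes b)\cdot v
\end{displaymath}
for trace-free $u,v$. The key step is the \emph{sliding identity}
\begin{displaymath}
\tracei{k}{u\cdot(a\otimes\delta\otimes b)\cdot v}\;=\;\tracei{k+1}{(\delta\otimes j)\cdot h},
\end{displaymath}
(with $j$ denoting the identity on the remaining trace wires) in which the fresh feedback wire takes the place of the interior $\delta$: its feedback input, just below the top-of-trace~$\delta$, is wired to the position where $\delta$ formerly fed into $v$, and its feedback output is wired to the position where $u$ formerly fed into~$\delta$. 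Thus $h$ is built entirely from $u$, $v$ and symmetries, and contains strictly one fewer $\delta$ than $\hat f$. Verifying this identity is a standard diagrammatic exercise using the traced symmetric monoidal axioms---naturality, superposing, and yanking---of precisely the flavour used to prove Lemma~\ref{lem:gbltrc}.

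Each application of the sliding identity decreases the number of interior $\delta$'s by one, adds one trace wire carrying a $\delta$ at its top, and introduces no new interior trace. After finitely many iterations all delays have migrated out, and collecting the delayed feedback wires together at the top of the enlarged trace via symmetries yields the canonical form $f=\tracei{m}{(\delta^n\otimes p)\cdot f'}$ with $f'$ trace-free and delay-free. The main obstacle is the careful diagrammatic bookkeeping for the sliding identity; once that is in hand, the induction itself is routine.
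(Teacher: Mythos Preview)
Your approach is correct and is essentially a spelled-out version of the paper's proof, which consists of a single diagram depicting exactly this rerouting of each interior~$\delta$ onto its own fresh feedback wire and then merging the new loop into the global trace. One small bookkeeping slip: the identity part must also cover the external inputs of~$f$, so if $f\colon A\to B$ then $p=k+A$ (not $p=k$), and likewise your~$j$ in the sliding identity should be the identity on $k+A$; with that correction the induction goes through as you describe.
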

\begin{proof}
	The proof is diagrammatic. All diagrams can be rewritten in the following, isomorphic, shape:
\begin{center}
	\includegraphics[scale=1]{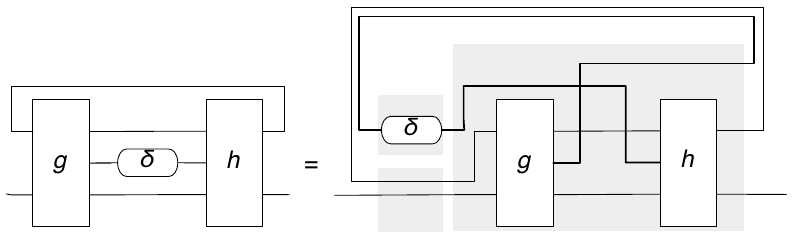}
\end{center}	
\end{proof}
\textit{Trace-Delay.} The most complex rule is the unfolding of the global trace, which also handles the delays. We will explain it before stating it. 

The first step is to derive an unfolding axiom for trace from the unfolding axiom for iteration, by expressing trace in terms of iteration. This is possible using the co-monoid $\fork$ and its co-unit~$\g w$~\cite{hasegawa2012models}:
\begin{center}
	\includegraphics[]{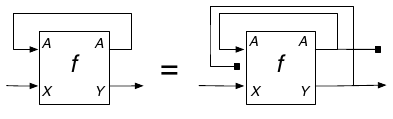}
\end{center}
\[\tracei{A}{f}=\iter^{A\otimes Y} ((id_A \otimes \g w_Y \otimes id_X) \cdot f) \cdot (\g w_A \otimes id_Y).\]
The right-hand side is an iteration which we unfold, and then simplify: 
\begin{center}
	\includegraphics[]{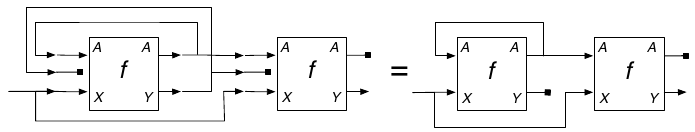}
\end{center}
If used as an operational semantics, the rewrite rules will apply to representations of circuits which have known inputs, so their input interface is empty (closed circuits). We will concentrate on this usage of the rewrite system, for now. In the case of optimisation for partial evaluation the more general unfolding provided above can be used. 

In the case of closed circuits the unfolding of the trace is shown in Fig~\ref{fig:tracedelay}. 
\begin{figure}
	\centering
	\includegraphics[]{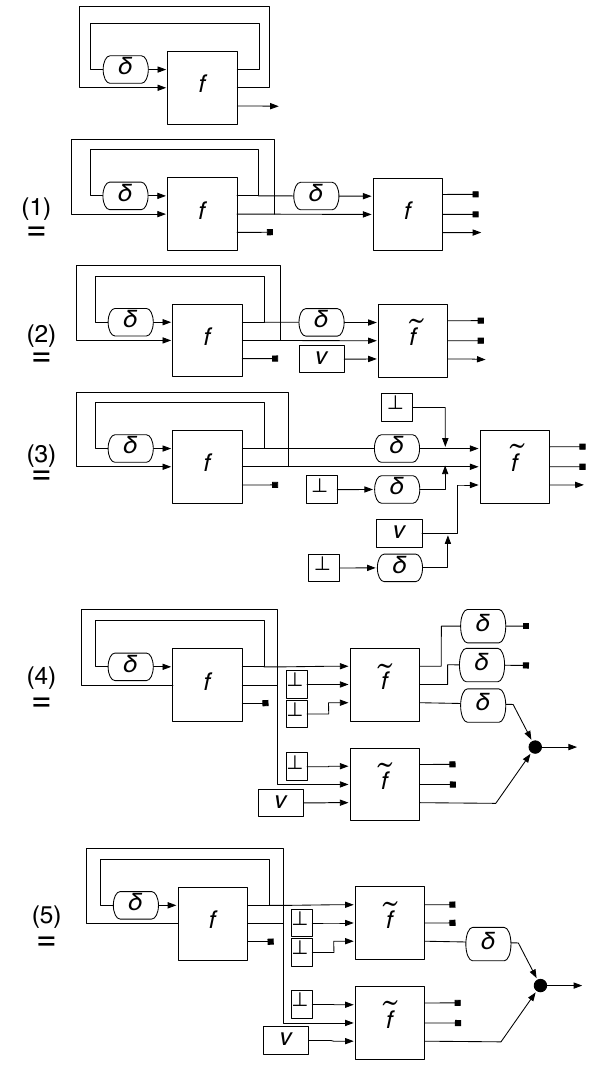}
	\caption{Unfolding a circuit in global-delay form}
	\label{fig:tracedelay}
\end{figure}
Step (1) represents the unfolding of the trace. The delays are all managed separately, so step (2) passifies the combinational circuit $f$ (Lem.~\ref{lem:pass}). Step (3) uses $\bot$ as the unit of the join-monoid along with the \textit{Unobservable Delay} axiom, to bring the circuit to a form where \textit{Generalised Streaming} (Lem.~\ref{lem:genstream}) can be applied (Step (4)). A final simplification removes redundant delays which are not observable (Step (5)). A final step (not shown) moves the delays on the output of $f$ to restore the global-delay form, using Lem.~\ref{lem:gdf}. The resulting circuit can be represented as a TFPG. 

%The \textit{Trace-Delay} rules is global and rather complex but it can be implemented in linear time as well, noting that with each application the size of the underlying graph may, in the worst case, triple. As in the case of the lambda calculus, execution can lead to exponential growth in the representation. 
%
%We will use the diagram on the right as the expansion of the trace. 
%Given a TFPG $D=(I,O,T,V,E, L)$ with $I, O$ the input, respectively output, lists, $T$ the set of feedback wire nodes, $V$ the set of nodes, $E$ the set of edges and $L$ the labelling function, the global trace unfolding rule is expressed diagrammatically as:
%\begin{center}
%	\includegraphics[]{}
%\end{center}
%The subgraphs $X', T'$ are copies of $X, T$, with vertices renamed. Note that the edges going into $T'$ are all mapped to $\g w$ nodes and the nodes in $T$ are connected, instead, to their counterparts in $T$. 
%
%Given the construction above we can conclude:
%\begin{myprop}
%	The {\em Trace} rewrite rule is sound. 
%\end{myprop}

\textit{Wire homeomorphisms.} We know that wire homeomorphisms can be applied efficiently, bringing the graph to a normal form~\cite[Lem.~5.2.10]{Kissinger}. However, we have prohibited the elimination of feedback nodes using these wire homeomorphisms. Instead we use an explicit rule that ``unwinds'' feedback loops if and only if they reach a value $v$. This rule results in the removal of a wire feedback node. A similar rule propagates stubs across feedback loops. 
\begin{center}
	\includegraphics[scale=1]{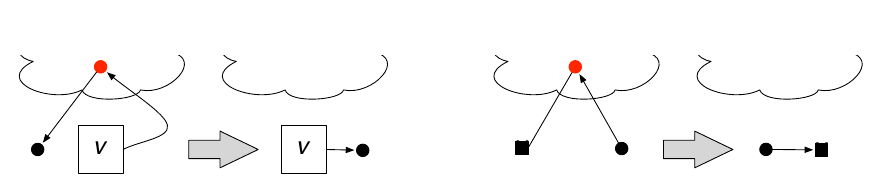}
\end{center}
The wire homeomorphism is a ``tidy-up'' rule which does not play a key role in the proof of our main results about the rewrite system, but is practically useful as it reduces the size of the graph, a form of {garbage collection}. 

We define the overall rewriting system as a cycle of local rewrites until canonical form is reached, followed by trace-delay unfoldings. This system is obviously not terminating, which is consistent with the fact that circuits with feedback can generate infinite waveforms. E.g.,
$
\iter(v::1)=v::\iter(v::1)=v::v::\iter(v::1)=\cdots.
$
which diagrammatically corresponds to (up to graph isomorphism and wire homeomorphism, to keep the representation simple):
\begin{center}
	\includegraphics[scale=1]{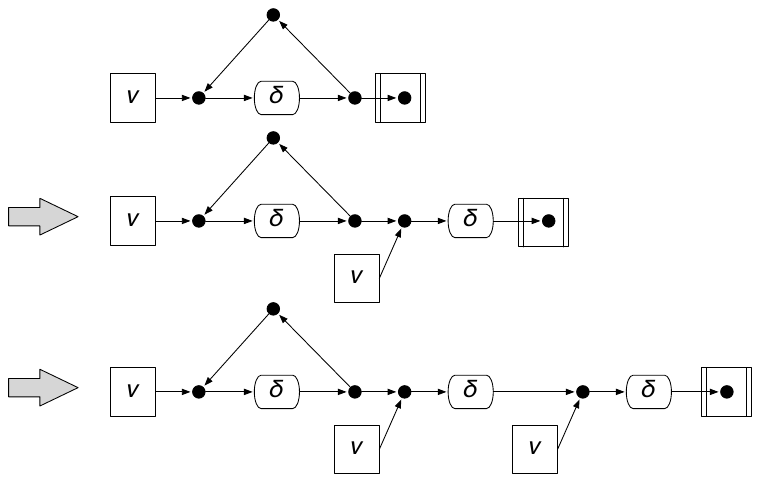}
\end{center}
\subsection{Productivity}
In a circuit of the form $v::f=(v\otimes(f\cdot \delta))\cdot\join$ value $v$ will be observed \textit{before} whatever the behaviour of $f$ is, since $v$ is \textit{instantaneous} whereas $f$ is guarded by a delay. We call such circuits \textit{productive}, and we add a \textit{labelled} rewrite rule to simplify productive circuit by removing the produced value
\[ v::f \stackrel v\Longrightarrow f.\]
This rule is sound because the sub-circuit $v::-$ can never be part of any redex.
So the example above can be written as
\begin{align*}
\iter(v::1)=v::\iter(v::1)\stackrel{v}{\Longrightarrow}\iter(v::1).
\end{align*}
%Note that, by the monoid associativity and the timelessness of the wire-joining $\join$ circuits of this form $s::f$ with $s$ a waveform are also productive. If $s=v'::s'$ then these  diagrams are equal:
%\begin{center}
%	\includegraphics[]{}
%\end{center}
%Equationally, $(v'::s')::f = v'::(f\otimes s')\cdot \join$.

However, we note circuits need not be productive in general. There exist circuits where unfoldings never reduce to shape $v::f$. Take, for example, the unfolding of $t\cdot \iter(\wedge)$:
\begin{center}
	\includegraphics[scale=1]{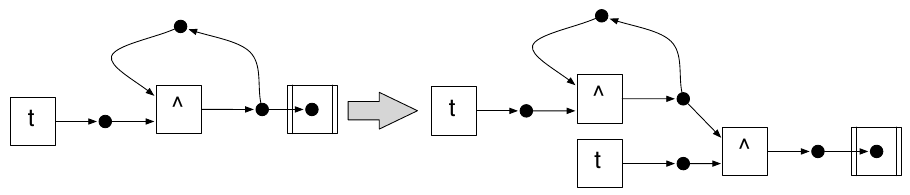}
\end{center}
This is a well known problem caused by a genuine \textit{instant  feedback} loop between the output and one of the inputs of the gate. If a circuit has no instant feedback loops, it is guaranteed to be productive. 
\begin{mydef}
	We say that a circuit has \textit{delay-guarded feedback} if its global-delay form is $\tracei{m}{\delta^m\cdot f}$. 
%	Diagrammatically, they are
%\begin{center}
%	\includegraphics[]{}
%\end{center}
\end{mydef}
If a circuit has delay-guarded feedback loops then it is productive. In fact it implements a Mealy automaton. 
\begin{mythm}\label{thm:dgc}
	Delay-guarded circuits with no inputs are productive. Given the TPFG representation of a delay-guarded feedback, the rewrite system will produce a TPFG graph representing a circuit $v::g$ in a finite number of steps.  
\end{mythm}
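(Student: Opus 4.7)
The plan is to instantiate the Trace-Delay unfolding rule (Fig.~\ref{fig:tracedelay}) on the global-delay form $\tracei{m}{\delta^m \cdot f}$, where $f$ is combinational and the external input interface is empty, and then argue that finitely many local rewrites bring the resulting TPFG into the syntactic shape $v::g$.

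First, I would apply the Trace-Delay rule. Step~(1) duplicates the traced body $\delta^m \cdot f$ into a ``now'' copy and a ``later'' copy; step~(2) passifies the combinational core $f$ via Lem.~\ref{lem:pass}, pulling the values internal to $f$ out in front as a fixed tuple $\mathbf{w}$; step~(3) uses $\bot$ as the unit of $\join$ together with the \emph{Unobservable delay} axiom to put each feedback wire into the shape $\bot\cdot\delta$ so that \emph{Generalised Streaming} (Lem.~\ref{lem:genstream}) applies in step~(4); step~(5) garbage-collects the resulting unobservable delays. The output delays are re-packaged into global-delay form using Lem.~\ref{lem:gdf}.

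Next, because the original input interface is empty and because every feedback wire is guarded by $\delta^m$, after this sequence the ``now'' copy of the passified core $\tilde f$ is a trace-free, delay-free combinational circuit of type $0 \to n$ whose inputs are \emph{entirely} fed by literal values — the pulled-out tuple $\mathbf{w}$ together with the $\bot$ values coming from the initial outputs of the delays. By Thm.~\ref{thm:redux}, the local rewrite rules reduce this fully-driven combinational sub-circuit to a TPFG representing a value tuple $\mathbf{v}$ in a finite number of steps. What remains — the ``later'' copy of $f$ together with its surrounding feedback wiring and a single residual $\delta$ — reassembles, again by Lem.~\ref{lem:gdf}, into a global-delay form which I name~$g$. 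The TPFG is then syntactically $(\mathbf{v}\otimes(g\cdot\delta))\cdot\join$, that is, $\mathbf{v}::g$ by definition.

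The main obstacle is the first paragraph: verifying rigorously that after the Trace-Delay unfolding and its attendant simplifications, the ``now'' copy of $\tilde f$ really does end up with every input wire driven by a literal value or a $\bot$. The delay-guard hypothesis is precisely what makes this work — had even one feedback wire bypassed the delays, the ``now'' copy would still contain a live feedback edge, and the local rewrites would stall exactly as they do in the $t\cdot\iter(\wedge)$ counterexample shown just above the theorem. This invariant should be stated as a small auxiliary lemma and proved by tracking the graph transformations in Fig.~\ref{fig:tracedelay} edge by edge; once it is established, appealing to Thm.~\ref{thm:redux} on the ``now'' copy and to Lem.~\ref{lem:gdf} for repackaging the remainder is routine.
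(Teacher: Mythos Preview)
Your proposal is correct and follows essentially the same approach as the paper: unfold the trace once, observe that in the delay-guarded case the ``now'' copy of the combinational core receives only literal values (the passified constants together with the $\bot$ values supplied by the guarding delays via \emph{Disconnect}), and reduce that closed combinational sub-circuit to a value tuple. The paper's own proof is more compressed --- it presents a specialized two-step diagram (Fig.~\ref{fig:dgc}) rather than tracing through all five stages of the general Fig.~\ref{fig:tracedelay} unfolding, and it appeals to Lem.~\ref{lem:prog} where you invoke the stronger Thm.~\ref{thm:redux} --- but the substance and the key insight are identical.
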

\begin{proof}
	The proof can be expressed diagrammatically as the sequence of equal diagrams in Fig.~\ref{fig:dgc}.
\begin{figure}
	\centering
	\includegraphics[]{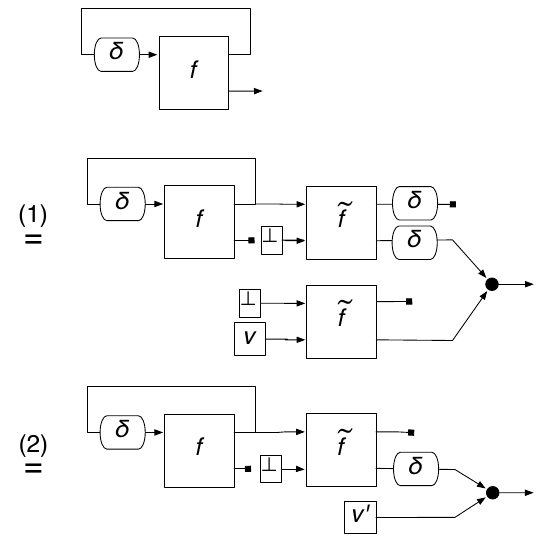}
	\caption{Proof of Thm.~\ref{thm:dgc}.}
	\label{fig:dgc}
\end{figure}
Step 1 is the unfolding of the global-delay form trace, with no instant feedbacks. Step 2 is the extensionality property of $(\bot^m\otimes \mathbf v)\cdot f$ in $\mathbf{ECirc}$, which must reduce to a value $\mathbf v'$, combided with the non-observability of delay axiom. Because the final diagram represents a circuit of the form  $\mathbf v'::-$ it is by definition productive.

The rewrite system will successfully produce the final circuit because in computing the canonical form, the sub-circuit $\bot^m\cdot f$ is delay free and trace free so Lem.~\ref{lem:prog} ensures that it will reduce to value $\mathbf v$. 
\end{proof}
With a delay, the unproductive example $t\cdot \iter{(\wedge)}$ becomes the productive $t\cdot\iter{((\delta\otimes 1)\cdot\wedge)}$ which, after a series of routine calculations can be shown to reduce to a productive circuit of shape~$\bot :: f$.

Note that the delay-guarded feedback condition is sufficient but not necessary. An interesting example of circuits with non-delay guarded feedback which are productive are the cyclic combinational circuits which we discuss in Sec.~\ref{sec:ccc} below.

To wrap up the operational semantics we also give a necessary and sufficient non-productivity criterion, to prevent needless unfoldings of the circuit. 

\begin{mythm}
	If a circuit has the shape in Fig.~\ref{fig:tracedelay} and is unproductive then all further unfoldings of the trace will be unproductive. 
\end{mythm}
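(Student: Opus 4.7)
My plan is to identify unproductivity as an obstruction living in a fixed sub-TFPG of the once-unfolded circuit, and then to show that subsequent trace unfoldings act only on a separate deferred part and cannot resolve this obstruction. Reading off the final panel of Fig~\ref{fig:tracedelay}, after one application of the trace-delay rule followed by local rewrites to canonical form, the TFPG decomposes as $(\mathbf w \otimes g)\cdot\join$, where $\mathbf w$ is the ``current'' output slot --- the result of evaluating the passified combinational core on the initial state $\bot^m$ and any external values --- and $g$ is a delayed sub-circuit, itself in global-delay form by Lem~\ref{lem:gdf}, holding the residual trace. The circuit is productive precisely when $\mathbf w$ reduces under local rewrites to a concrete value; unproductivity is witnessed by the presence in $\mathbf w$ of a gate instance whose input pattern is not eliminable by the (Enhanced) Constant rules, typically a gate fed by an irreducible $\bot$ through an instant-feedback path that the unfolding has sampled at the initial state.

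Next I would argue by induction on the number of further unfoldings that this obstruction persists. The inductive invariant is that after $k$ additional unfoldings the TFPG has the shape $(\mathbf w \otimes C_k)\cdot\join$, with $\mathbf w$ the offending sub-TFPG fixed by the first unfolding and $C_k$ the deferred computation, still in global-delay form. The $(k{+}1)$-st unfolding targets the trace inside $C_k$ and is therefore disjoint from $\mathbf w$. Moreover, none of the local rewrite rules fire across the $\join$ separating $\mathbf w$ from the rest: the waveform streaming rule requires the current side to be headed by a concrete value, which by unproductivity it is not, while the Constant, Fork, Stub, wire-homeomorphism and trace-delay rules have redexes confined to one side of the $\join$. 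Hence $\mathbf w$ remains unchanged under further rewriting, and by the Progress Lemma~\ref{lem:prog} the overall circuit can only reach $v :: h$ form if $\mathbf w$ itself reduces to a value, which it does not.

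The main obstacle is verifying that no rewrite rule can bridge the $\mathbf w$--$g$ boundary in a way that resolves the problematic gate. Concretely, I would inspect each rule schema and check that its matching pattern either lies entirely inside $\mathbf w$, entirely inside $g$, or crosses the $\join$ only when the current slot is already a value. The cleanest packaging is to characterize $\mathbf w$ as the maximal trace-free, delay-free sub-TFPG feeding the head position of the output, and to observe that this characterization is preserved by any rule that does not reduce $\mathbf w$ to a value --- so that unproductivity, defined as irreducibility of $\mathbf w$, is exactly what prevents the single boundary-crossing rule (waveform streaming) from ever being applicable.
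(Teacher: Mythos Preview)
Your central decomposition $(\mathbf w \otimes g)\cdot\join$, with $\mathbf w$ a closed sub-TFPG ``evaluated on the initial state $\bot^m$'', does not hold in precisely the situation the theorem addresses. Unproductivity can only arise when some feedback wires are \emph{not} delay-guarded --- if all were, Thm.~\ref{thm:dgc} would already force productivity. In the general global-delay form $\tracei{m}{(\delta^n\otimes p)\cdot f'}$ with $n<m$, only the $n$ delayed feedback wires are initialised to~$\bot$ by the trace-delay rule; the remaining $m-n$ instant-feedback wires stay traced and continue to feed the ``current'' copy of~$\tilde f$. So $\mathbf w$ is not closed: it receives inputs from the trace, and when you unfold again those inputs are rerouted through an additional copy of~$\tilde f$. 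The sub-TFPG you want to hold fixed genuinely changes, and your claim that ``the $(k{+}1)$-st unfolding targets the trace inside $C_k$ and is therefore disjoint from $\mathbf w$'' fails. (Your own diagnosis --- ``a gate fed by an irreducible $\bot$ through an instant-feedback path'' --- already hints that the obstruction lives on a wire still connected to the trace, not on a closed island.)

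The paper does not try to isolate a fixed sub-graph. Instead it unfolds once more and observes that the only new material --- the ``cloud'' --- has every output guarded by a delay, hence contributes nothing to the head position. What remains to determine the head is a configuration containing the leftmost occurrence of~$\tilde f$, and this occurrence sits in exactly the same context (same~$\mathbf v$, same $\bot$'s on the delayed wires, same instant-feedback connections) as before the extra unfolding. If that occurrence could now reduce to values, it could have done so already, contradicting unproductivity. The argument compares the reducibility of two occurrences of~$\tilde f$ rather than asserting invariance of a sub-TFPG; to repair your approach you would need to recast the inductive invariant in those terms, tracking the context of the head-determining copy of~$\tilde f$ rather than a putatively disjoint~$\mathbf w$.
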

\begin{proof}[Proof outline]
Let a circuit as in Fig.~\ref{fig:tracedelay} (Step 5) be unproductive. Suppose that we unfold the trace again. The resulting circuit will look as in the diagram below. The sub-circuit drawn as a cloud is irrelevant for productivity because all its outputs are guarded by delays -- so we can omit it. 
\begin{center}
\includegraphics[]{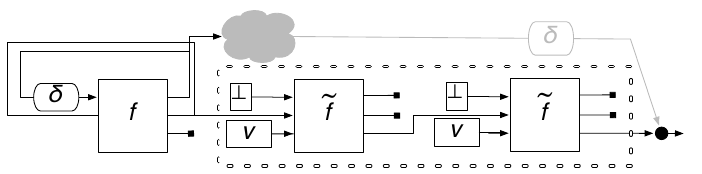} 
\end{center}

If this circuit is to be productive then the circuit framed by the dotted line must reduce to a value. But this contradicts the hypothesis, because if the first (on the left) occurrence of the sub-circuit $\tilde f$ can reduce to values, then it could have reduced to values before the unfolding of the circuit. So the unfolded circuit must be also unproductive.
\end{proof}
\subsection{Example: Cyclic combinational circuits}

A challenging class of circuits, which cannot be handled by standard tools, are combinational circuits with feedback which is not delay guarded~\cite{DBLP:conf/iccad/Malik93}. Consider Boolean circuits with \textit{and} and \textit{or} gates. The following is an example of such a circuit: 
\begin{center}
	\includegraphics[]{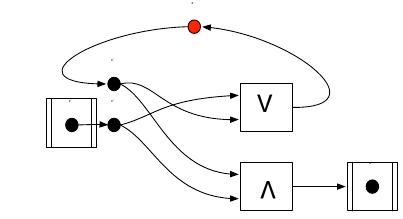}
\end{center}
Closing the circuit by applying a boolean value at the input makes it possible to apply the diagrammatic operational semantics, using the enhanced equational rewrite rules: 
\begin{center}
	\includegraphics[]{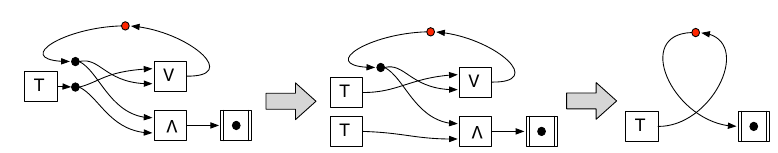}
\end{center}
Finally, we unfold the (superfluous) loop:
\begin{center}
	\includegraphics[]{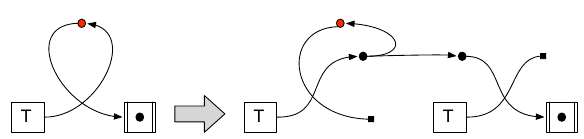}
\end{center}
The circuit above reduces to \textit{True}, by applying the co-unit of fork and the \textit{Stub} rule several times.

\section{Specialising abstract digital circuits}\label{sec:ex}
If we are not using the rewrite rules as an operational semantics, and so are not concerned with productivity issues, we can apply the reduction rules to open and to parametrised circuits. This gives us a basis for powerful partial evaluation-like optimisations of circuits. This is a new contribution with potentially interesting practical applications.

\subsection{Abstract cyclic combinational circuits}\label{sec:ccc}
Consider the circuit represented by the TFPG below (highlighting feedback nodes in red), where the gate $m$ is a \textit{multiplexer} and $F, G$ are abstract circuits. For readability we omit the input labels of the multiplexer.
\begin{center}
	\includegraphics[scale=1]{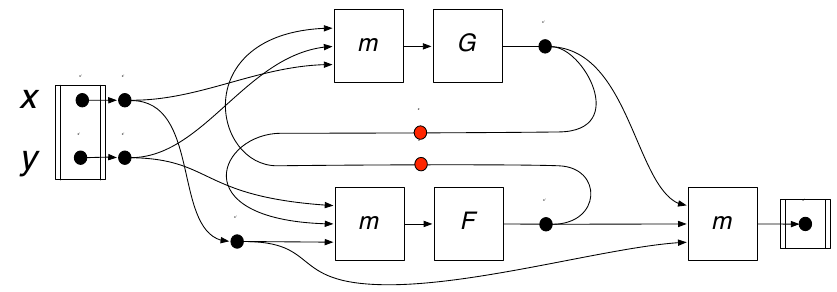}
\end{center}
This circuit, presented in~\cite{DBLP:conf/iccad/Malik93}, implements the operation \textit{if x then F(G(y)) else G(F(y))}. The circuit has no delays so the feedback loops are combinational, so they cannot be handled by conventional circuit analysis tools. However, the multiplexers are set up so that no matter what the value applied at $x$, the residual circuit is feedback-free. The false feedback loops in the circuit are only a clever way to reuse the two abstract circuits $F$ and $G$.

Consider the case when $x$ becomes $\g t$:
\begin{center}
	\includegraphics[scale=1]{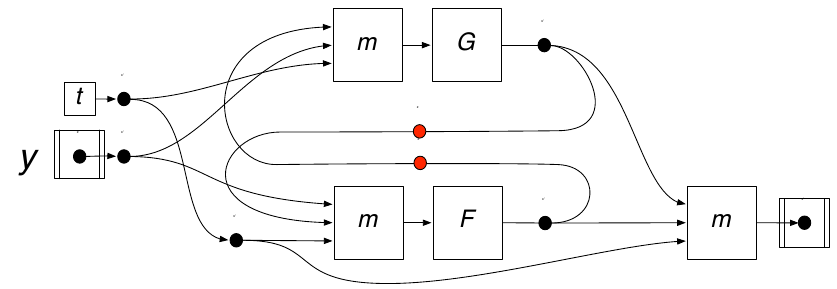}
\end{center}
Fork and  enhanced equational rewrite rules for $m$ lead to:
\begin{center}
	\includegraphics[scale=1]{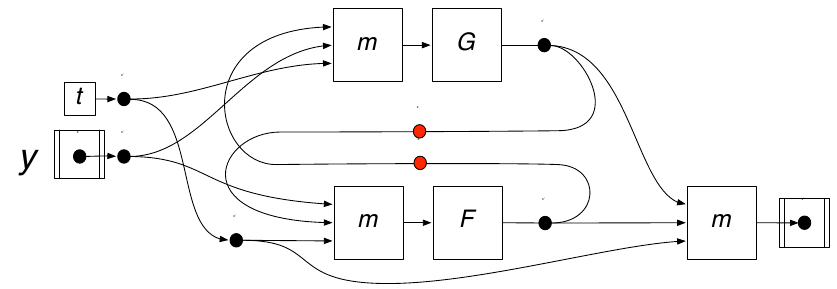}
\end{center}
The \textit{Stub} rule repeatedly applied results in a circuit which can be written as, to emphasise the residual feedback loop:
\begin{center}
	\includegraphics[scale=1]{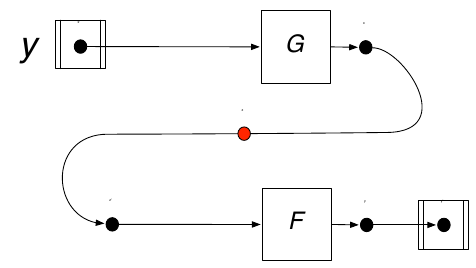}
\end{center}
To the naked eye the circuit above is obviously feedback-free, being equal to $G\cdot F$. However, the rewrite rules have no way to eliminate feedback wire nodes in this TPFG and we need to unfold the circuit:
\begin{center}
	\includegraphics[scale=1]{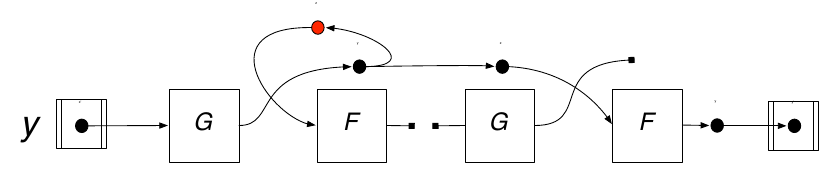}
\end{center}
The \textit{Stub} rule will remove the first occurrence of $F$ and the second occurrence of $G$, resulting, as expected, in $G\cdot F$.

%\begin{figure*}[t]
%\centering
%	\includegraphics[scale=0.15]{} \raisebox{12ex}{$\Longrightarrow^*$}
%	\includegraphics[scale=0.15]{} \raisebox{12ex}{$\Rightarrow$}
%	\includegraphics[scale=0.15]{} \raisebox{4ex}{$\Rightarrow$}
%	\includegraphics[scale=0.15]{} \raisebox{4ex}{$\Rightarrow$}
%	\raisebox{11ex}{\includegraphics[scale=0.15]{}}
%	\caption{Automatic rewrite of the circuit in Fig.~\ref{fig:malik}}
%	\label{fig:malikg}
%\end{figure*}

\subsection{Pre-logical circuits}
The diagrammatic semantics can also model operationally transistor-level circuits, which is also a new development. 

The circuit framework is general enough to allow operational reasoning about digital circuits at a level of abstraction below logical gates, for example metal-oxide-semiconductor field-effect transistor (MOSFET) circuits. In \textit{saturation mode} such transistors can be considered to take on a discrete set of values which, depending on the circuit and the analysis, can be four-valued (\textit{high impedance} $<$ \textit{high, low} $<$ \textit{unknown.}) or six-valued (\textit{high impedance} $<$ \textit{weak high, weak low}; \textit{weak high} $<$ \textit{strong high}; \textit{weak low} $<$ \textit{strong low}; \textit{strong high, strong low} $<$ \textit{unknown.}). Unlike Boolean logic, where the wire-join construct is not used, in a transistor circuit output wires are joined, and the semantics of the wire-join is that of the value-lattice join operator. 

We will work in the six-value lattice $\bot$ (high impedance), $\g h$ (weak high), $\g H$ (strong high), $\g l$ (weak low), $\g L$ (strong low), $\top$ (unknown). We will take the (idealised) nMOS and pMOS transistors as the basic gates. The nMOS transistor ($\g n:2\rightarrow 1$) works like a low-activated switch, but it only allows low current to flow. High current can flow, but is much diminished. The defining equations are
\begin{align*}
(\g L\otimes \g L)\cdot \g n & = \g L &
(\g L\otimes \g l)\cdot \g n & = \g l \\
(\g L\otimes \g H)\cdot \g n & = \g h &
(\g L\otimes \g h)\cdot \g n & = \bot \\
(\g H\otimes v)\cdot \g n & = \bot &
(v \otimes v')\cdot \g n & = \top, \text{ if } v\neq \g{H,L.} \\
(v\otimes \top )\cdot \g n & = \top
\end{align*}
The pMOS transistor ($\g p:2\rightarrow 1$) is activated by the $\g h$ value and allows low $\g L$ value to pass, so the equations are the converse of their nMOS counterparts.

When implementing a logical gate in MOSFET we want $\g H$ to correspond to \textit{true} and $\g L$ to false. The correct behaviour of a gate must keep this representation without, e.g. producing $\top$ or weak output $\g h, \g l$.

A very simple circuit is the \textit{inverter} (\textsf{inv}), given in conventional schematics and as a TFPG. The syntactic description is $\g{inv}=\fork\cdot (1\otimes \g h\otimes 1\otimes \g l)\cdot (\g p\otimes \g n)\cdot \join$.
\begin{center}
	\includegraphics[scale=0.25]{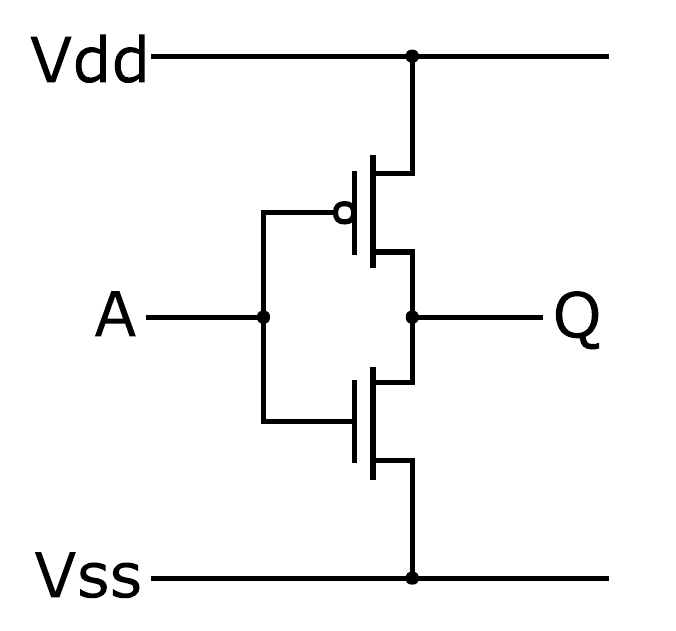}\qquad 
	\includegraphics[]{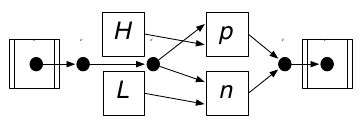}
\end{center}
We can see via a sequence of rewrites that this circuit correctly maps $\g H$ to $\g L$ and vice-versa. For example:
\begin{center}
	\includegraphics[scale=1]{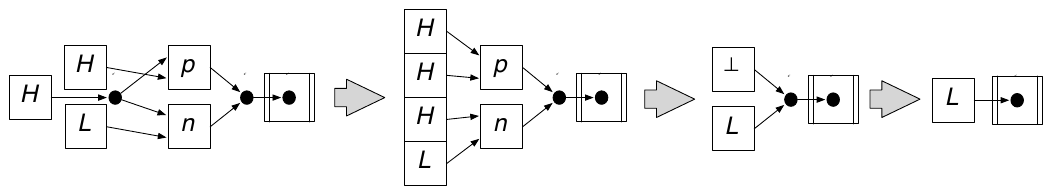}
\end{center}

%A \textit{NAND} gate (\textsf{nand}), in both representations, is shown below. The syntax is $\g{link\,}(a'=a,b'=b)\g{\,in\,}(a'\otimes b')\cdot (a\otimes \g h\otimes b\otimes \g h\otimes b\otimes \g l)\cdot (\g p\otimes \g p\otimes \g n)\cdot \fork\otimes a\otimes \g n)\cdot \fork)$, and can be automatically tested to have the right behaviour. 
%\begin{center}
%	\includegraphics[scale=0.25]{}\qquad 
%	\includegraphics[]{}
%\end{center}
%
%In MOSFET we can also 

Let us now revisit the example of the previous section, but with the multiplexer implemented down to transistors. Let \textsf{pass} be a pass-through gate and \textsf{m} a multiplexer:
\begin{align*}
\mathsf{pass} &= \fork^2\cdot(\g{inv}\otimes 3)\cdot (1\otimes \g x \otimes 1)\cdot (\g p\otimes \g n)\cdot \join\\
\mathsf{m} &= (\fork\otimes 2)\cdot (1\otimes \g x \otimes 1)\cdot (2\otimes \g{inv}\otimes 1)\cdot \g{pass}^2\cdot \join.
\end{align*}
The resulting TFPG has too many nodes to reduce by hand but we have implemented a prototype tool for partial evaluation by rewriting, available for download\footnote{\url{https://github.com/AliaumeL/circuit-syntax}}.

The abstract circuit of the previous section is represented as a TFPG in the first graph in Fig.~\ref{fig:maliktrans}. The residual circuit after partial evaluation is shown as the second graph. It is interesting to note that the MOSFET version of the circuit leads to a different residual circuit compared to the more high level circuit of the previous section. The reason is that we do not use any enhanced rewriting rules, so the residual circuit contains some irreducible pass-through gates. 
\begin{figure*}[t]
	\includegraphics[scale=0.11]{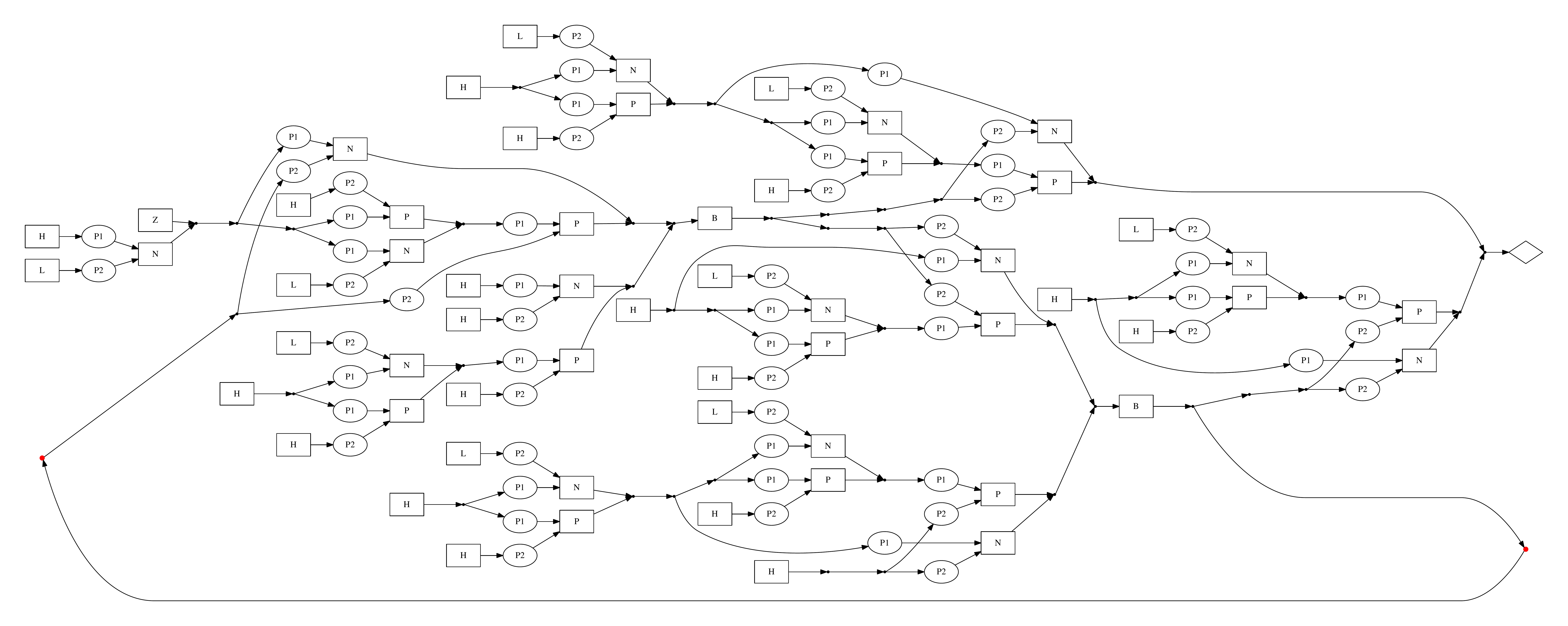} \raisebox{12ex}{$\Rightarrow$}
	\raisebox{8ex}{\includegraphics[scale=0.11]{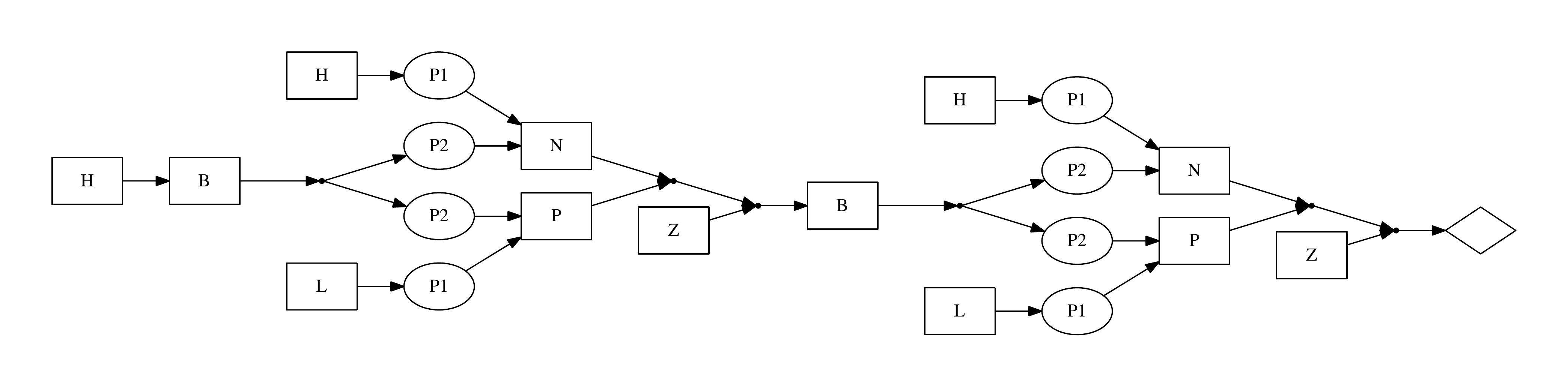}}
	\caption{MOSFET implementation of the circuit in Sec.~\ref{sec:ccc}, and the residual circuit after partial evaluation}
	\label{fig:maliktrans}
\end{figure*}

\section{Conclusion, related and further work }

Some theoretical ingredients we have used in this work have been around for quite a while and it is perhaps somewhat surprising that they have not been put together for a coherent operational and diagrammatic treatment of digital circuits. 
Our Thm.~\ref{thm:ocirccart} implies that $\mathbf{OCirc}_\delta^*$ is a Lawvere theory~\cite{DBLP:journals/iandc/EilenbergW67} with trace, also known as an iteration theory~\cite{elgot1975monadic}, a concept which has been studied extensively~\cite{bloom1993iteration}, leading to recent connections with rewrite systems~\cite{DBLP:conf/rta/Hamana16}. The relation between trace and iteration has also been studied before in a somewhat similar categorical setting~\cite{DBLP:journals/entcs/Hasegawa02}. The connection between Lawvere theories and PROPS has also been recently studied~\cite{DBLP:conf/cmcs/BonchiSZ16}.

We have been in particular inspired by the deep connections between monoidal categories and diagrams~\cite{selinger2010survey} which \emph{inter alia} have been used in the modelling of quantum protocols~\cite{DBLP:conf/lics/AbramskyC04} and signal-flow graphs~\cite{DBLP:conf/popl/BonchiSZ15}. Some contrasts are quite interesting. Unlike in quantum protocols, all digital circuits with no inputs and no outputs are equal whereas in quantum computing they correspond to \emph{scalars}, which allow quantitative aspects to be expressed. Should we have taken a similar direction we could have included quantitative aspects such as power consumption in our formalism, but we would have lost the diagonal property. Obviously, two copies of a circuit will at least sometimes consume more power than one copy.

The signal-flow graphs in \cite{DBLP:conf/popl/BonchiSZ15} are linear and reversible, which is not the case for digital circuits. Without elaborating the mathematics too much, a key difference between their model and ours can be illustrated by the following equality, involving the interaction between fork, join, and disconnected wires, as a trace can be created out of a fork and a join:
\begin{center}
	\includegraphics[scale=1]{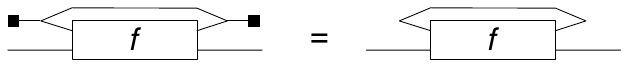}
\end{center}
Of course, by comparison, in our setting the directionality of the wires never changes, so the correct equality is:
\begin{center}
	\includegraphics[scale=1]{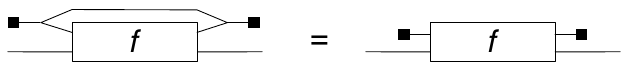}
\end{center}
These simple diagrammatic equations above truly capture the essential difference between \textit{electric} and \textit{electronic} circuits!

It is also quite surprising that despite major early progress in the algebraic treatment of circuits, \cite{DBLP:conf/lfp/Sheeran84,Luk1993}, this line of work has not come earlier to a systematic conclusion. But the contribution of our work is not merely assembling off-the-shelf components. The \textit{Streaming} axiom is new, and the fact that it generalises to arbitrary passive combinational circuits is a crucial ingredient for our work. To make the unfolding of iteration computationally tractable, the diagrammatic representation required a non-obvious canonical form, which must be easy to compute. Without it our earlier  semantics~\cite{GhicaJ16} cannot be used as an effective operational semantics.

Beyond the scholarly context and technical innovations, we are most excited about the potential applications of our work. Cyclic combinational circuits are a litmus test for circuit modelling theories and we hope the reader can appreciate that in our framework their model is elementary. For comparison, there are few theories that can handle such circuits, and they demand a significant level of mathematical sophistication~\cite{DBLP:journals/fmsd/MendlerSB12}. The true potential of our method is unleashing of symbolic, operational and syntactic methods, such as partial evaluation, for reasoning about and optimising circuits, methods which proved so effective in programming languages.

There is work to be done, from exploring theoretical questions to implementing  more efficient circuit-rewriting tools. The most interesting theoretical questions involve the extensionality and characterisation of circuits with feedback. We do not have yet equivalents of Thm.~\ref{thm:ext} for circuits with delays and feedback. These theorems would play an important role in dealing with circuits with instant feedback which are currently unproductive in the graph rewrite system, such as $t\cdot\iter{(\land)}$ or infinitely productive such as $\iter((1\otimes v)\cdot \delta)$, by developing meta-theoretic reasoning principles for such graphs. In the concrete category (Sec.~\ref{app:snd}) we can see that the first circuit is equal to $\bot$ while the second can be considered a ``normal form'' for infinite waveforms. Since all the operations in this setting are finite-state ($\mathbf{V}$ is finite) a general, complete and efficient framework should be achievable. This is currently in progress.

A more long term development could see these ideas applied to other computational structures which are dataflow categories, such as reactive programming~\cite{wan2000functional} or arrows~\cite{hughes2000generalising}, giving a diagrammatic operational semantics more abstract than that of the underlying programming language. 
\\[2ex]
\textit{Acknowledgements.} We thank George Constantinides and Alex Smith for feedback and suggestions.

% We recommend abbrvnat bibliography style.

\bibliographystyle{abbrv}

\end{document}